\title{Proto-Quipper with Reversing and Control}
\author{Peng Fu
\institute{University of South Carolina, USA}
\and
Kohei Kishida
\institute{University of Illinois Urbana-Champaign, USA}
\and 
Neil J. Ross \qquad\qquad Peter Selinger
\institute{Dalhousie University, Canada}
}
\begin{document}
\maketitle

\begin{abstract}
The quantum programming language Quipper supports circuit operations
such as reversing and controlling certain quantum
circuits. Additionally, Quipper provides a function called
\textit{with-computed}, which can be used to program circuits of the
form $g; f; g^{\dagger}$. The latter is a common pattern in quantum
circuit design.  One benefit of using with-computed, as opposed to
constructing the circuit $g ; f; g^{\dagger}$ directly from $g$, $f$,
and $g^{\dagger}$, is that it facilitates an important
optimization. Namely, if the resulting circuit is later controlled,
only $f$ needs to be controlled; the circuits $g$ and $g^{\dagger}$
need not even be controllable.

In this paper, we formalize a semantics for reversible and
controllable circuits, using a dagger symmetric monoidal category $\R$
to interpret reversible circuits, and a new notion we call a
\textit{controllable category} $\N$, which encompasses the control and
with-computed operations in Quipper. We extend the language
Proto-Quipper with reversing, control and the with-computed operation.
Since not all circuits are reversible and/or controllable, we use a
type system with modalities to track reversibility and
controllability. This generalizes the modality of
Fu-Kishida-Ross-Selinger 2023.  We give an abstract categorical
semantics, and show that the type system and operational semantics are
sound with respect to this semantics.
\end{abstract}

\section{Introduction}

The goal of this paper is to devise a strongly typed, functional
programming language that can formally treat the quantum operations of
reversing and control.  Many quantum algorithms take essential
advantage of reversing, control, and their associated operation
``with-computed''.  At the same time, some operations, such as state
preparations and measurements, can make circuits irreversible or
uncontrollable.  It is therefore desirable to equip quantum
programming languages with a type system that can prevent the user
from trying to reverse the irreversible or to control the
uncontrollable. In this paper, we incorporate
reversing, control, and with-computed into the language
\emph{Proto-Quipper}.

Proto-Quipper is a family of programming languages that aims to
provide a formal syntax and semantics for various features of the
embedded quantum programming language Quipper
\cite{GLRSV2013-rc,GLRSV2013-pldi}.  Like Quipper, Proto-Quipper is a
quantum circuit description language. When a Proto-Quipper program is
run, it outputs a quantum circuit; this generated circuit can later be
executed on a quantum computer. We refer to these two runtimes as
``circuit generation time'' and ``circuit execution time'',
respectively.  There are various errors that programmers can introduce
by attempting the impossible---%
e.g., to duplicate linear data, to apply circuit operations to a
program that is not a circuit, etc.---%
but since Proto-Quipper is strongly typed, it can catch such errors at
compile time, rather than at circuit generation time or circuit
execution time.  Previous research on Proto-Quipper includes providing
a syntax and semantics for circuit boxing
\cite{RS2017-pqmodel,ross2015algebraic}, supporting recursion
\cite{BMZ2018}, combining linear types and dependent types
\cite{FKRS2020-dpq-tutorial,FKS2020-lindep}, and incorporating dynamic
lifting \cite{ColledanL22,FKRS-2022,FKRS-2023,LeePVX21}.  This paper
defines a new variant Proto-Quipper-C, which extends the type system
of Proto-Quipper with reversibility and controllability.

\subsection{Reversing, control, and with-computed}

In Quipper, the \texttt{controlled} function inputs a circuit and
returns the controlled version of the circuit. The \texttt{reverse}
function inputs a circuit and returns its adjoint. The program
\texttt{(with\_computed$\,\,g\,\,f$)} first performs a circuit $g$, then a
circuit $f$, and finally the adjoint of $g$.
\[
\Qcircuit @C=0.5em @R=.2em @!R {
  & \multigate{2}{g}& \qw & \multigate{2}{f} & \qw &
  \multigate{2}{g^{\dagger}} & \qw\\
  & \ghost{g} & \qw & \ghost{f} & \qw & \ghost{g^{\dagger}} & \qw\\
  & \ghost{g} & \qw & \ghost{f} & \qw & \ghost{g^{\dagger}} & \qw \\
}
\]

The \textit{with-computed} circuit pattern is very common in quantum computing. For example,
Bennett's method for constructing reversible classical circuits \cite{bennett1973logical}
uses this pattern to initialize ancillary qubits and uncompute
them; and the quantum Fourier transform (QFT) addition \cite{draper2000addition} conjugates a series of controlled rotation gates by the QFT. This pattern is used so often that
Quipper implements the following optimization: when controlling a quantum circuit $g; f; g^{\dagger}$ generated by the with-computed function, it suffices to control the circuit $f$.

\[
  \begin{array}{lll}
    \Qcircuit @C=0.5em @R=.2em @!R {
         &\qw & \qw & \ctrl{1} & \qw & \qw & \qw \\
      & \multigate{2}{g}& \qw & \multigate{2}{f} & \qw & \multigate{2}{g^{\dagger}} & \qw\\
      & \ghost{g}       & \qw & \ghost{f}        & \qw & \ghost{g^{\dagger}}        & \qw\\
      & \ghost{g}       & \qw & \ghost{f}        & \qw & \ghost{g^{\dagger}}        & \qw \\
    }
          & 
        \raisebox{-2.6em}{=} &
    \Qcircuit @C=0.5em @R=.2em @!R {
       &\ctrl{1} & \qw & \ctrl{1} & \qw & \ctrl{1} & \qw \\
      & \multigate{2}{g}& \qw & \multigate{2}{f} & \qw & \multigate{2}{g^{\dagger}} & \qw\\
      & \ghost{g} & \qw & \ghost{f} & \qw & \ghost{g^{\dagger}} & \qw\\
      & \ghost{g} & \qw & \ghost{f} & \qw & \ghost{g^{\dagger}} & \qw \\
    }                              
  \end{array}
\]

\noindent In fact, the circuits $g$ and $g^{\dagger}$ need not even be
controllable.  Thus the left hand side of the above circuit identity
is more general than the right hand side.  It is also more efficient,
since it uses a smaller number of controlled gates. Note that in
quantum computing, a controlled gate may require substantially more
resources than the non-controlled version.  For example, a common way
of measuring resources is the \textit{T-count}, i.e., the number of
$T$-gates required to implement a gate. While a $T$-gate itself has
$T$-count $1$, a controlled $T$-gate has a $T$-count of at least $5$.

\subsection{Modalities for reversing and control}

In Quipper, controlling an uncontrollable circuit or reversing a
non-reversible circuit will give rise to runtime errors. We want to
incorporate reversing and control in Proto-Quipper in such a way that
erroneously controlling or reversing a circuit is detected as a typing
error at compile time. We achieve this by introducing a notion of
modality $\alpha \in \{0, 1, 2\}$. Here, the modality $2$ is
associated with circuits that are both controllable and reversible,
for example a Hadamard gate. The modality $1$ is associated with
circuits that are reversible but not controllable, for example an
initialization gate. (Recall that ``reversing'' means taking the
adjoint of a circuit, not necessarily the inverse. For a more detailed discussion of this circuit model, see \cite[Sec.~4.2]{GLRSV2013-pldi}.)
The modality $0$
is associated with circuits that are neither controllable nor
reversible, for example a measurement gate.
Note that because all controllable quantum
gates are unitary and therefore reversible, 
we do not include a modality for circuits that are controllable but not reversible.

We can use modalities to annotate the type of a circuit. E.g.,
the values of the type ${\Circ_{\alpha}(S, U)}$ are circuits with input
type $S$, output type $U$ and a modality annotation $\alpha$.  One
important feature of modalities is that they are composable.  Suppose
we want to compose a circuit $\Circ_{\alpha}(S, U)$ with a circuit
$\Circ_{\beta}(U, S')$. The resulting circuit will have type
$\Circ_{\alpha \wedge \beta}(S, S')$, where
$\alpha \wedge \beta = \mathrm{min}(\alpha, \beta)$.  This means that
as long as we know the modalities for the basic gate sets, we can
devise a type system to track the modalities of the circuits
constructed from basic gates.  Moreover, the reversing, control, and
with-computed operations can be given the following types:

\[
\begin{array}{r@{~~}l@{~~}ll}
  \controlled_{S}   & : & \Circ_{\textcolor{blue}{2}}(U, U) \to \Circ_{\textcolor{blue}{2}}(S\otimes U, S\otimes U). & 
  \\
  \reverse  & : & \Circ_{\textcolor{blue}{\alpha}}(U, S) \to \Circ_{\textcolor{blue}{\alpha}}(S, U), &  
                                                                                                    \text{where $\alpha > 0$}.     
\\
  \withComputed & : & \Circ_{\textcolor{blue}{\alpha}}(U, S) \times \Circ_{\textcolor{blue}{2}}(S, S) \to \Circ_{\textcolor{blue}{2}}(U, U), & 
  \text{where $\alpha > 0$}. 
\end{array}
\]
\noindent Note that $\withComputed$ requires its first argument to be
at least reversible, and the resulting circuit is again controllable.  

Our modal type system makes it possible to detect errors, like
controlling an uncontrollable circuit, at compile time. In a practical
implementation (such as the one available from {\cite{prototype}}), it
is moreover possible for the type checker to automatically infer
modalities, so that the programmer does not need to explicitly specify
them in the source code.  See Appendix~\ref{app:controlling-ccz} for
some sample code from our prototype implementation.

\subsection{Related work}

The reverse, control and with-computed operations originally appeared
in Quipper \cite{GLRSV2013-pldi}, but were lacking a formal semantics.
In our work on Proto-Quipper \cite{FKRS-2022,FKRS-2023}, we gave type
systems with modalities for a feature called dynamic lifting, but not
for reversing and control.

Many practical quantum programming languages, such as Qiskit and Cirq
also include the concepts of reversibility and controllability.
However, since these languages are imperative, errors of reversibility
or controllability are treated as runtime exceptions
{\cite{cirq,qiskit}}. The functional quantum programming language
QWire \cite{paykin2017qwire} supports circuit reversal, but not the
control and with-computed operations. The language Silq \cite{silq}
has type annotations \textbf{qfree} and \textbf{mfree}, which denote
classical functions and functions that do not use measurement,
respectively. Silq's core language supports reversing, and the control
operation is supported by an if-then-else construct.  The main
difference between Silq and Proto-Quipper is that Silq is not a
circuit description language and does not have a notion of boxed
circuits.

We are only aware of one language besides Quipper that has a
with-computed operation, namely ProjectQ
\cite{haner2018software,steiger2018projectq}.  However, ProjectQ is
not a strongly-typed language.

\subsection{Contributions}

In this paper, we formalize a notion of controllable category. We
extend the Proto-Quipper type system of \cite{FKRS-2023} with
modalities for reversing and control and define an operational
semantics. We provide safety properties that guarantee that a
well-typed program is free of run-time errors. We also give an
axiomatization of an abstract categorical model of Proto-Quipper with
reversing, control, and the with-computed operation, and show that the
operational semantics is sound for it.

\section{Controlling a permutation circuit}
\label{ssec:cont-perm}

In Proto-Quipper, there are two ways to represent the swap
operation. One way is by permuting the logical order of the 
circuit outputs without using any gates, as in the following diagram. 
\begin{equation}\label{eqn:perm-1}
  \vcenter{
    \Qcircuit @C=2.1em @R=1.5em {
      \lstick{\text{input 1} = x} & \qw & \rstick{x = \text{output 2}}\qw \\
      \lstick{\text{input 2} = y} & \qw & \rstick{y = \text{output 1}}\qw \\
    }
  }
\end{equation}

\noindent The above circuit is generated by the following Proto-Quipper program.
{\small
\begin{verbatim}
f : !(Qubit * Qubit -> Qubit * Qubit)
f input = let (x, y) = input in (y, x)
\end{verbatim}
}
The other way is by using an explicit swap gate, as in the following diagram.
\begin{equation}\label{eqn:perm-2}
  \vcenter{
    \Qcircuit @C=2.1em @R=1.5em {
      \lstick{\text{input 1} = x} & \qswap & \rstick{y = \text{output 1}}\qw \\
      \lstick{\text{input 2} = y} & \qswap \qwx & \rstick{x = \text{output 2}}\qw 
    }
  }
\end{equation}

\noindent The corresponding Proto-Quipper program is the following.
{\small
\begin{verbatim}
g : !(Qubit * Qubit -> Qubit * Qubit)
g input = let (x, y) = input in Swap x y
\end{verbatim}
}
\noindent These circuits are semantically equivalent: each of them
sends $(x,y)$ to $(y,x)$.  However, care must be taken when
controlling these circuits, since controlling them naively may produce
the following non-equivalent circuits.

\begin{equation}\label{eqn:perm-3}
  \small
    \hspace{0.75in}
    \vcenter{
      \Qcircuit @C=2.1em @R=1.5em {
        \lstick{\text{(a)}\quad\text{control} = c} & \qw & \rstick{c}\qw \\
        \lstick{\text{input 1} = x} & \qw & \rstick{x = \text{output 2}}\qw \\
        \lstick{\text{input 2} = y} & \qw & \rstick{y = \text{output 1}}\qw \\
      }
    }
    \hspace{1.3in}
    \hspace{0.75in}
    \vcenter{
      \Qcircuit @C=2.1em @R=1.5em {
        \lstick{\text{(b)}\quad\text{control} = c}& \ctrl{1} & \rstick{c}\qw\\
        \lstick{\text{input 1} = x} & \qswap & \rstick{y = \text{output 1}}\qw \\
        \lstick{\text{input 2} = y} & \qswap \qwx & \rstick{x = \text{output 2}}\qw \\
      }
    }
    \hspace{0.75in}
\end{equation}
\smallskip
      
While the second circuit is correct, the first one is not. Indeed, the
first circuit will send input 1 to output 2 independently of the state
of the control qubit. This is not the correct behavior, since the
swapping should only take place when the control qubit is in the state
$\ket{1}$. When the control qubit is in the state $\ket{0}$, the
functions $f$ and $g$ should both behave like the identity function on
\verb!Qubit * Qubit!.

Therefore, the programming language must be aware not only of the
circuit, but also of the implicit permutation of any qubits performed
in the language. If such an operation is controlled, explicit swap
gates must sometimes be inserted. Proto-Quipper-C does this correctly,
whereas the original Quipper implementation did not. In
Proto-Quipper-C, controlling circuit {\eqref{eqn:perm-1}} and
circuit {\eqref{eqn:perm-2}} both result in {\eqref{eqn:perm-3}}(b),
not {\eqref{eqn:perm-3}}(a).

\section{A formal model of reversing, control and with-computed}
\label{sec:formal-model}

Before we can define a programming language for building quantum
circuits, we must specify what a quantum circuit is. However, there
exist many different classes of circuits; for example, they differ by
what data they can manipulate (only qubits, or also classical bits,
and/or more exotic objects like qutrits), what the built-in gates are
(for example, the Clifford+$T$ gate set, Toffoli+Hadamard, rotations
by arbitrary angles), whether or not qubit initialization and
termination is supported, whether measurement is considered as a gate,
and so on. Therefore, rather than tailoring our programming language
to a specific class of circuits, we make both the language and its
operational and denotational semantics \emph{parametric} on a given
class of circuits. This is analogous to making a classical programming
language parametric on some signature of built-in operations. For us,
quantum circuits are abstractly given as the morphisms of monoidal
categories with certain properties, which we now specify.

We start from a given symmetric monoidal category $\m$. In practice,
the objects of $\m$ are typically generated from wire types such as
$\Bit$ and $\Qubit$, and the morphisms are quantum circuits generated
from a finite set of gates. In the following, we define what we mean
by a reversible subcategory of $\m$.  Recall that a dagger symmetric
monoidal category is a symmetric monoidal category equipped with a
contravariant, identity-on-objects, involutive functor such that for
all morphism $f:A\to B$, we have $f^{\dagger}:B\to A$. Moreover, the
dagger functor is required to be compatible with the symmetric
monoidal structure \cite{selinger2007dagger}.

\begin{definition}
  By a \emph{reversible subcategory} of $\m$, we mean a dagger
  symmetric monoidal category $\R$, together with an
  identity-on-objects, faithful, symmetric (strong) monoidal functor
  $I:\R\to\m$. We usually regard $I$ as an inclusion functor, i.e., we
  regard $\R$ as a subcategory of $\m$.
\end{definition}

A morphism $f$ of $\R$ is called \emph{unitary} if
$f \circ f^{\dagger} = \id$ and $f^{\dagger} \circ f = \id$. Note that
we do not require all morphisms of $\R$ to be unitary. Thus, our
notion of reversible morphism means a morphism that has an adjoint,
not necessarily an inverse.  The dagger functor provides a semantics
for the operation of reversing a quantum circuit.

In order to capture the notion of control and with-computed, we define
a notion of controllable category $\N$. To motivate the following
definition, we should first clarify some use cases. The most common
example of control from quantum computing is the control usually
represented by a black dot ``$\bullet$''. This kind of control
``fires'' the gate if the control qubit is in state $\ket{1}$, and
acts as the identity otherwise. There are also other ways of
controlling a qubit; for example, the white dot ``$\circ$'' fires if
the control qubit is in state $\ket{0}$. A more general kind of control
in quantum computing is as follows: Let $V, W$ be Hilbert spaces,
let $K$ be a subspace of $V$, and let $G:W\to W$ be a gate. We define
the $K$-controlled gate $\ctrled_{K}(G) : V\otimes W\to V\otimes W$
by $\ctrled_{K}(G)(v\otimes w) = v\otimes G(w)$ if $v\in K$ and
$\ctrled_{K}(G)(v\otimes w) = v\otimes w$ if $v\in K^{\bot}$,
extended to all other states by linearity. Note that this general
notion of control not only encompasses the black and white dot, but
also many other kinds; for example, it is possible to control a gate
on the state $\ket{+}$, to control on multiple qubits and/or classical
bits, in which case the subspace $K$ may or may not be
$1$-dimensional. Even the trivial control is possible, namely when
$K=V$; in this case the gate always ``fires''.

Working in an abstract monoidal category, we will not talk about
subspaces and their orthogonal complements. Instead, we consider a monoid $\Cc$ of \textit{controls}, regarded as
a discrete monoidal category, together with a monoidal functor
$F:\Cc\to\N$. We write $\uu{K}$ for $F(K)$. The idea is that each $K\in \Cc$ specifies a kind of control on the object $\uu{K}\in\N$.

\begin{definition}
  Let $\R$ be a dagger symmetric monoidal category. By a
  \emph{controllable category} of $\R$, we mean a dagger symmetric
  monoidal category $\N$ with the same objects as $\R$, and equipped with
  the following structure:

  \begin{enumerate}
  \item\label{cond:b} For all $A, B \in \N$, a function
    $- \action- : \R(B, A) \times \N(A, A) \to \N(B, B)$ (also denoted by $\withComputed$) such that:
    \[\id \action h = h\]
    \[ (g_{1} ; g_{2}) \action h =   g_{1} \action (g_{2} \action h).\]
    \[(g_{1}\otimes g_{2}) \action (h_{1}\otimes h_{2}) = (g_{1}\action h_{1}) \otimes (g_{2}\action h_{2})\]
    \[(g \action h)^{\dagger} = g \action h^{\dagger}\]

  \item\label{cond:C} A discrete monoidal category $\Cc$ of {\em
      controls}, together with a monoidal functor mapping $K\in\Cc$ to
    $\uu{K}\in\N$.
    
  \item\label{cond:d} For all $A \in \N$ and $K\in \Cc$, a function
    $\ctrled_{K} : \N(A, A) \to \N(\uu{K}\otimes A, \uu{K}\otimes A)$ such
    that the following hold:
    \[\ctrled_{K}(\id_{A}) = \id_{\uu{K}\otimes A}\]
    \[
      \begin{tikzcd}
        \uu{I}\otimes A \arrow[d, "\iso"] \arrow[rr, "{\ctrled_{I}(h)}"] && \uu{I}\otimes A \arrow[d, "\iso"]\\
        A \arrow[rr, "h"]& & A,
      \end{tikzcd}
    \]
    \[
      \begin{tikzcd}
        (\uu{A \otimes B})\otimes C
        \arrow[d, "\iso"]
        \arrow[rrrr, "{\ctrled_{A\otimes B}(h)}"]
        &&&& (\uu{A \otimes B})\otimes C
        \arrow[d, "\iso"]\\
        \uu{A} \otimes (\uu{B} \otimes C)
        \arrow[rrrr, "{\ctrled_{A}(\ctrled_{B}(h))}"]
        &&&& \uu{A} \otimes (\uu{B} \otimes C),
      \end{tikzcd}
    \]
    \[
      \begin{tikzcd}
        (\uu{A \otimes B})\otimes C
        \arrow[d, "\iso"]
        \arrow[rrrr, "{\ctrled_{A\otimes B}(h)}"]
        &&&& (\uu{A \otimes B})\otimes C
        \arrow[d, "\iso"]\\
        (\uu{B \otimes A})\otimes C
        \arrow[rrrr, "{\ctrled_{B\otimes A}(h)}"]
        &&&& (\uu{B \otimes A})\otimes C.
      \end{tikzcd}
    \]
  \item\label{cond:c} An identity-on-objects, strict dagger symmetric monoidal functor $G : \N \to \R$ such that
    \[ G(g \action h) = g; G(h) ; g^{\dagger}
    \]
    
  \end{enumerate}
\end{definition}

\noindent \textbf{Remarks}.
\begin{itemize}

\item The functor $G$ gives an intended interpretation for the
  with-computed function. It captures the common quantum computation pattern $g; h; g^{\dagger}$.
  
\item The functor $G$ preserves the equalities of condition (\ref{cond:b}). For example, $G(\id\action h) = \id; G(h); \id^{\dagger} = G(h)$. 
  Note that we do not require $g \action \id_{A} = \id_{B}$, because this would imply $g; g^{\dagger} = g; \id_{A}; g^{\dagger} = G(g \action \id_{A}) = G(\id_{B}) = \id_{B}$, which is not an assumption we make for $\R$. For similar reasons,
  we do not require $g \action (h_{1} ;  h_{2}) = (g \action h_{1}) ; (g \action h_{2})$. 

\end{itemize}

To illustrate how the above notions are applicable to quantum
computing, we can consider a ``standard model'' of these axioms,
parameterized by a class of quantum circuits. Suppose the circuits are
generated by some types of wires (for example, $\Qubit$ and $\Bit$)
and some set of gates. Suppose that some distinguished subset of the
gates are reversible, and a further subset of these are
controllable. Also suppose that the class of circuits is equipped with
some primitive control gadgets, for example, a black-dot and a
white-dot control on qubits, and that for each gate, all of its
controlled versions are also gates. The category $\m$ consists of all
circuits, arranged into a symmetric monoidal category (the symmetric
monoidal structure permits the crossing of wires, but such crossings
are not considered to be gates). Let $\R$ be the subcategory of
circuits that are made from reversible gates. Finally, the structure
of $\N$ is somewhat interesting. The morphisms of $\N$ are circuits
made from gates of two colors, say red and green. Each red gate is
reversible and each green gate is controllable (and in particular,
each controllable gate comes in two different colors). All morphisms
in $\N$ are reversible, and reversing does not change color. The
objects of the monoidal category $\Cc$ are sequences of control
gadgets, for example $\bullet\circ\circ$, and
$\uu{\bullet\circ\circ} = \Qubit\otimes\Qubit\otimes\Qubit$. The
control function adds controls to the green gates.  The with-computed
function takes any morphism $g$ of $\R$ and any morphism $f$ of $\N$,
and produces $g; f; g^{\dagger}$, where $g$ and $g^{\dagger}$ have
been colored red. The functor $G$ is the forgetful functor that
forgets the colors.

\section{A type system for reversing, control and with-computed}
\label{type-system}

In this section, we define the syntax and type system of
Proto-Quipper-C, a version of Proto-Quipper with support for
reversing, control and with-computed. We assume that we are given
three fixed small categories $\m$, $\R$, and $\N$ as specified in the
previous section. The type system and (in later sections) the
operational semantics and categorical semantics are all parameterized
by this choice.

\subsection{Syntax}

The syntax of Proto-Quipper-C is shown in Figure~\ref{fig:syntax}.
It is similar to the one specified in \cite{FKRS-2023}, except
for the highlighted terms and types.

\begin{figure}
  \[\small
    \def\arraystretch{1.1}
    \begin{array}{llll}
      \textit{Modalities} & \alpha, \beta  &::=&  \greybox{2}\mid \greybox{1} \mid \greybox{0}
      \\ 
      \textit{Types} & A, B  &::=&  \Unit \mid  W \mid  \Bool\mid  {!}_\alpha A 
                                   \mid A \multimap_\alpha B
        \\
        &&& {~}\mid  \greybox{\Circ_{\alpha}(S, U)} \mid  A \otimes B
      \\ 

      \textit{Parameter Types} & P, R  &::=& \Unit \mid  \Bool
      \mid {!}_{\alpha} A \mid  \Circ_{\alpha}(S, U) \mid P \otimes R

      \\ 
      
      \textit{Simple Types} & S, U  &::=& \Unit \mid  W
      \mid S \otimes U
      
      \\ 
      
      \textit{Terms} & M, N  &::=& c\mid  x \mid \ell \mid \unit \mid \lambda x . M \mid  M N  \mid  (M, N) \mid  \mathsf{let}\ (x, y) = N\  \mathsf{in}\ M \\
                          &&& {~}\mid \mathsf{lift}\ M\mid  \force \, M\\
                          &&& {~}\mid \greybox{\circc(S,C,U)}\mid \mathsf{apply}(M, N) \mid  \mathsf{box}_U M \\
                          &&& {~}  \mid \greybox{\reverse M} \mid \greybox{\controlled_{K} M}\mid \greybox{\withCompute{M}{N}}\\
      
      \textit{Simple Terms} & a, b  &::=& \ell  \mid  \unit\mid  (a, b)

      \\ 
      \textit{Contexts} & \Gamma  &::=& \cdot \mid  x : A, \Gamma \mid  \ell : W, \Gamma

      \\ 
      \textit{Parameter contexts} & \Phi  &::=& \cdot \mid  x : P, \Phi.
      
      \\ 
      \textit{Label Contexts} & \Sigma  &::=& \cdot \mid  \ell : W, \Sigma 

      \\ 

      \textit{Values}
      & V &::=& c \mid x \mid  \ell \mid  \unit \mid  \lambda x. M  \mid  (V, V')\mid  \mathsf{lift}\ M 
      \mid \circc(S,C,U)

      \\ 
      
        \textit{Circuits}
        & \greybox{C, D} & \in & \greybox{\N(\interp{S}, \interp{U})} \mid \greybox{\R(\interp{S}, \interp{U})} \mid \greybox{\m(\interp{S}, \interp{U})}               
\end{array}
\]

\caption{The syntax of Proto-Quipper-C}
\label{fig:syntax}
    
\end{figure}

\textbf{Simple types.} The letter $W$ ranges over a set of
\textit{wire types}, which in typical applications will be types that
represent a single wire in a circuit, such as $\Qubit$ and
$\Bit$. Each wire type represents a particular object of the category
$\m$, and tensors of wire types are called {\em simple types}. We
write $\interp{S}$ to denote the object corresponding to $S$ in the
category $\m$, $\N$, or $\R$ (note that they all have the same
objects).

\textbf{Labels.} We also assume that $\ell$ ranges over a set of
\textit{labels}, which are distinct from variables and are used as
pointers to places in a circuit where a gate can be attached. Unlike
variables, labels cannot be substituted, and they can only have wire
types. A \textit{simple term} is essentially a tuple of labels. We call a typing
context that contains only labels a \textit{label context} (denoted by
$\Sigma$). If $\Sigma = \ell_1:W_1,\ldots,\ell_n:W_n$, we write
$\interp{\Sigma}$ for the object corresponding to $\Sigma$ in $\m$,
$\N$, or $\R$, i.e., $\interp{W_1}\otimes\ldots\otimes\interp{W_n}$.

\textbf{Parameter types.} A certain subset of the types are called
\textit{parameter types}. These are the types whose values can be
freely duplicated and discarded, whereas values of simple types are
resources. Also, the values of parameter types are computed at circuit
generation time, whereas the values of simple types are merely tuples
of labels, which are placeholders for values that will be computed at
circuit execution time. A typing context that contains only parameter
types is a \textit{parameter context} (denoted by $\Phi$).  We have
included $\Bool$ as a typical example of a parameter type, but in a
real language, one would of course have many more such types (such as
$\Nat$, $\Int$, sum types, etc.) For space reasons, we omit a comprehensive
treatment of sum types, including booleans.

\textbf{Terms.} As usual, $x$ ranges over a set of \textit{variables}.
The symbol $c$ is used for constant symbols of the language, such as
$\mathsf{True}$ and $\mathsf{False}$ for the type $\Bool$, as well as
other appropriate constants. The terms of the lambda calculus are
fairly standard. Lift and force are from linear lambda
calculus, and are used to construct and deconstruct a term of type
$!_{\alpha}A$. We will discuss the rest of the terms later.

\textbf{Modalities.} The symbols $\alpha$ and $\beta$ range over
modalities, which we take to be numbers in the lattice $2 > 1 > 0$.
Here, $\alpha = 2$ indicates that a circuit is reversible and
controllable; $\alpha=1$ indicates that a circuit is reversible but
not necessarily controllable, and $\alpha=0$ indicates a general
circuit that may be neither reversible nor controllable. The modality
in the type $!_{\alpha}A$ means that when its value is forced, it may
append a gate that has modality $\alpha$ to the then-current circuit.
Similarly, the modality in the type $A \multimap_{\alpha} B$ indicates
that when its value is applied to an argument, it may append a gate
that has modality $\alpha$.

\textbf{Circuits.} Since Proto-Quipper-C is a circuit description
language, we must have some terms in the language that represent
circuits. We write such terms as $\circc(S,C,U)$, where
$C:\interp{S}\to\interp{U}$ is a morphism in the appropriate circuit
category $\m$, $\R$, or $\N$. Note that in this context, category
theory is used not as a denotational semantics (that will be done in
Section~\ref{sec:denotational-semantics}), but as part of the
\textit{syntax} of the language. This is effectively the same as
adding a constant symbol for every possible circuit. We write
$\Circ_{\alpha}(S, U)$ for the type of quantum circuits with inputs
$S$ and outputs $U$ (which are simple types), subject to the modality
$\alpha$ as described above. The terms $\apply$ and $\boxt$ are used
for constructing and deconstructing circuits; they will be explained in
more detail in the section on the operational semantics. We include
the terms $\reverse M$, $\controlled_{K} M$, and $\withCompute{M}{N}$
for reversing, control and the with-computed operation. The subscript
$K$ ranges over the objects of the category $\Cc$ which is part of the
given structure of the control category $\N$. We moreover assume that
$\uu{K}$ is a simple type (rather than just an arbitrary object).

In a practical implementation, programmers will not usually write
values of the form $\circc(S,C,U)$ directly. Instead, basic quantum gates
such as the Hadamard gate will usually be predefined in a library, and
users will write programs to combine these into larger circuits.

\subsection{Typing rules}

\begin{figure}
    \[ \small
      \begin{tabular}{cc}
        \multicolumn{2}{c}{
        \infer[\textit{var}]
          {\Phi, x : A \vdash_2 x : A}{}
          \hspace{0.5in}
          \infer[\textit{label}]
            {\Phi, \ell : W \vdash_2 \ell : W}{}
          \hspace{0.5in}
          \infer[\textit{unit}]
            {\Phi\vdash_2 \unit : \Unit}{}
        }            
        \\
        \\

                \infer[\textit{lambda}]{\Gamma \vdash_{2} \lambda x . M : A \multimap_\alpha B}
        {\Gamma, x : A \vdash_{\alpha} M : B} %
                                                           &
      \infer[\textit{app}]
      {\Phi, \Gamma_1, \Gamma_2 \vdash_{\alpha_1 \wedge \alpha_2 \wedge \beta} M N : B}
      {\Phi, \Gamma_1 \vdash_{\alpha_1} M :  A \multimap_\beta B & \Phi, \Gamma_2 \vdash_{\alpha_2} N : A}
      \\
      \\
      \infer[\textit{lift}]{\Phi \vdash_2 \lift M : {!}_\alpha A}
      {\Phi \vdash_\alpha M : A}

      &
      \infer[\textit{pair}]
      {\Phi, \Gamma_1, \Gamma_2 \vdash_{\alpha_1 \wedge \alpha_2} (M, N) : A\otimes B}
      {\Phi,\Gamma_1 \vdash_{\alpha_1} M :  A & \Phi,\Gamma_2 \vdash_{\alpha_2} N : B}
      \\
      \\

\infer[\textit{force}]{\Gamma \vdash_{\alpha \wedge \beta} \mathsf{force}\ M : A}
        {\Gamma \vdash_\beta M :\ !_\alpha A}
      &
        \infer[\textit{let}]
        {\Phi, \Gamma_1, \Gamma_2 \vdash_{\alpha_1 \wedge \alpha_2} \mathsf{let}\ (x, y) = N\  \mathsf{in}\ M : C}
        {\Phi, \Gamma_1, x : A, y : B \vdash_{\alpha_1} M :  C &  \Phi, \Gamma_2 \vdash_{\alpha_2} N : A\otimes B}
      \\
      \\
      \infer[\textit{circ}]{\Phi \vdash_{2} \circc(S,C,U) :  \Circ_{\alpha}(S, U)}
      {
      \begin{array}{ll}
        C \in \D^{\alpha}(\interp{S} , \interp{U}) 
      \end{array}
      }
      &
        \infer[\textit{apply}]{\Phi, \Gamma_1, \Gamma_2 \vdash_{\alpha \wedge \beta \wedge \gamma} \mathsf{apply}(M, N) :  U}{\Phi, \Gamma_1 \vdash_\alpha M : \Circ_{\gamma}(S,U) & \Phi, \Gamma_2 \vdash_\beta N : S}
      \\
      \\

      \infer[\textit{box}]{\Gamma \vdash_\alpha \mathsf{box}_S \  M : \Circ_{\beta \wedge \gamma}(S,U)}{\Gamma \vdash_\alpha M : {!}_{\beta} (S \multimap_{\gamma} U)}&
         \infer[\textit{ctrl}]{\Gamma \vdash_\alpha \controlled_{K} M : \Circ_{2}(\uu{K}\otimes S,\uu{K}\otimes S)}{\Gamma \vdash_\alpha M : \Circ_{2}(S, S)}        
      \\
      \\
      \infer[\textit{rev}]{\Gamma \vdash_\alpha \reverse M : \Circ_{\gamma}(U, S)}{\Gamma \vdash_\alpha M : \Circ_{\gamma}(S, U) & \gamma > 0} 
      &
        \infer[\textit{wc}]{\Phi, \Gamma_1, \Gamma_2 \vdash_{\alpha \wedge \beta} \withCompute{M}{N} : \Circ_2(U, U)}
        {
        \begin{array}{ll}
          \Phi, \Gamma_1 \vdash_\alpha M : \Circ_{\gamma}(U, S) &  \gamma > 0\\
          \Phi, \Gamma_2 \vdash_\beta N : \Circ_2(S, S)
        \end{array}
      }   
     \end{tabular}
   \]
   \caption{The typing rules}
   \label{fig:typing}
\end{figure}

The typing rules are shown in Figure~\ref{fig:typing}. Typing
judgments are of the form $\Gamma \vdash_{\alpha} M : A$. Here, the
modality $\alpha$ asserts that during the evaluation of the term $M$,
gates of modality $\alpha$ may be appended to the current circuit.
Recall that we write $\alpha \wedge \beta$ to denote the greatest
lower bound, or $\mathrm{min}(\alpha, \beta)$.
To facilitate the statement of the typing rules, we treat
the contexts as unordered lists.

Except for the modalities, the typing rules are similar to the ones in
{\cite{FKRS-2023}}. We make the following observations about the
modalities. Since a value cannot append any gates, it does not change
the current circuit state. Therefore values always have modality $2$,
i.e., $\Gamma \vdash_{2} V : A$. The \textit{lambda} and \textit{lift}
rules store the modality $\alpha$ of $M$ in the respective types
$A \multimap_{\alpha} B$ and $!_{\alpha}A$.  Similarly, in the typing
rule \textit{box}, the modalities $\beta, \gamma$ jointly determine
the modality of the circuit type. Conversely, in the rules
\textit{app}, \textit{force}, and \textit{apply}, the modality in the
types $A \multimap_{\alpha} B$, $!_{\alpha}A$, and
$\Circ_{\gamma}(S,U)$ affects the modality of the current term. In the
rule \textit{circ}, we write $\D^{\alpha}$ to mean $\m$ if
$\alpha = 0$, and $\R$ if $\alpha = 1$, and $\N$ if $\alpha = 2$. The
\textit{rev} rule requires the circuit to be reversible, i.e.,
$\gamma$ must be at least 1. Similarly, the rule \textit{ctrl}
requires a controllable circuit, i.e., modality 2.  It also requires
the circuit to have matching input and output types, i.e.,
$\Circ_2(S, S)$ rather than $\Circ_2(S, U)$. Finally, the typing rule
for with-computed requires the term $N$ to be a controllable circuit,
whereas the term $M$ only needs to be reversible. The resulting term
$\withCompute{M}{N}$ is again a controllable circuit.

\section{Operational semantics}
\label{op-sem}
In this section, we give an operational semantics of our language. We follow the usual paradigm of keeping type checking separate from evaluation, i.e., after type checking succeeds, the evaluation is done on untyped terms. Accordingly, in Section \ref{subsec:evaluation-rules}, we provide evaluation rules for untyped terms (which could potentially lead to run-time errors). In Section \ref{subsec:type-preservation}, we prove that well-typed terms do not cause runtime errors.

\subsection{Evaluation rules}
\label{subsec:evaluation-rules}
\begin{figure}[t]
  \[\footnotesize
    \begin{tabular}{cccc}
      \\
      
      \infer[\textit{app}]{(C_1, \Sigma_1, M N)  \Downarrow (C_4, \Sigma_4, V')}
      {
      \begin{array}{c}
        (C_1, \Sigma_1, M) \Downarrow (C_2, \Sigma_2, \lambda x. M') \\
        (C_2, \Sigma_2, N) \Downarrow (C_3, \Sigma_3, V) \\
        (C_3, \Sigma_3, [V/x]M') \Downarrow (C_4, \Sigma_4, V')
      \end{array}
      }

      &
        \infer[\textit{force}]{(C, \Sigma, \force\ M)  \Downarrow (C'', \Sigma'', V)}
        {
        \begin{array}{c}
          (C, \Sigma, M) \Downarrow (C', \Sigma', \lift\ M')\\
          (C', \Sigma', M') \Downarrow (C'', \Sigma'', V)
        \end{array}
      }                   
      
      &
      \\
      \\
      \infer[\textit{let}]{(C, \Sigma, \mathsf{let}\ (x, y) = N\  \mathsf{in}\ M) \Downarrow (C'', \Sigma'', V)}
      {
      \begin{array}{c}
        (C, \Sigma, N) \Downarrow (C', \Sigma', (V_1, V_2))\\
        (C', \Sigma', [V_1/x, V_2/y]M) \Downarrow (C'', \Sigma'', V)
      \end{array}
      }
      &
        \infer[\textit{pair}]{(C, \Sigma, (M, N)) \Downarrow (C'', \Sigma'', (V_1, V_2))}
        {
        \begin{array}{c}
          (C, \Sigma, M) \Downarrow (C', \Sigma', V_1) \\
          (C', \Sigma', N) \Downarrow (C'', \Sigma'', V_2)
        \end{array}
      }        
      &
      \\ \\

      \infer[\textit{box}]
      {(C, \Sigma, \mathsf{box}_S\ M) \Downarrow (C', \Sigma', \circc(S, \interp{b}\circ D, U)) }
      {
      \begin{array}{c}
        (C, \Sigma, M) \Downarrow (C', \Sigma', \mathsf{lift} \ M') \\
        \gen(S) = (a, \Sigma'') \\
        (\interp{a}^{\dagger}, \Sigma'', M'\ a) \Downarrow (D, \Sigma''', b)\\
        \ungen(b, \Sigma''') = U
      \end{array}
      }
      &
        \infer[\textit{apply}]{(C_1, \Sigma_1, \mathsf{apply}(M, N))  \Downarrow (C', \Sigma', b)}
        {
        \begin{array}{c}
          (C_1, \Sigma_1, M) \Downarrow (C_2, \Sigma_2, \circc(S,D,U)) \\
          (C_2, \Sigma_2, N) \Downarrow (C_3, \Sigma_3, a) \\
          \gen(U) = (b, \Sigma)  \\          
          (C', \Sigma') = \mathrm{append}(C_3, \Sigma_3, a, D, b, \Sigma) 
        \end{array}
      }

      \\
      \\
      \infer[\textit{rev}]{(C, \Sigma, \reverse M) \Downarrow (C', \Sigma', \circc(U,D^{\dagger},S))}
      {(C, \Sigma, M) \Downarrow (C', \Sigma', \circc(S,D,U))}
      &
      \infer[\textit{ctrl}]{(C, \Sigma, \controlled_{K} M) \Downarrow (C', \Sigma',  \circc(\uu{K}\otimes S,\ctrled_{K}D,\uu{K}\otimes S))}
      {
        \begin{array}{c}
          (C, \Sigma, M) \Downarrow (C', \Sigma', \circc(S,D,U)) \\
        \end{array}
      }
    \end{tabular}
  \]
  \[\small
    \infer[\textit{wc}]{(C, \Sigma, \withCompute{M}{N}) \Downarrow
      (C'', \Sigma'', \circc(S,D_1 \action D_2,S))}
    {
      \begin{array}{c}
        (C', \Sigma', M) \Downarrow (C'', \Sigma'', \circc(S,D_1,U))\\
        (C, \Sigma, N) \Downarrow (C', \Sigma', \circc(U,D_2,U))
      \end{array}
    }          
  \]
  \caption{The operational semantics}
  \label{fig:operational}
\end{figure}

In this section, we will define a big-step, call-by-value operational
semantics for Proto-Quipper-C. Like the syntax and the type system,
the operational semantics is parameterized by the triple of categories
$\m$, $\R$, and $\N$. The operational semantics is defined in
Figure~\ref{fig:operational}.  It is defined on configurations that
are triples $(C, \Sigma, M)$, where $M$ is a term, $\Sigma$ is a label
context, and $C : X \to \interp{\Sigma}$ is a morphism in $\m$, $\R$
or $\N$. We call the pair $(C, \Sigma)$ the \textit{circuit state} of
the configuration. The evaluation of a closed program $M$
begins with the empty circuit state, i.e., $(\id_{I}, \emptyset)$. 

The \textit{app}, \textit{force}, \textit{let} and \textit{pair} rules
are standard. They do not directly modify the underlying circuit
state.
  
The \textit{box} and \textit{apply} rules use some notations that
warrant explaining. If $\Sigma$ is a label context, $a$ a simple term,
and $S$ a simple type, we write $\Sigma \vdash a : S$ instead of
$\Sigma \vdash_2 a : S$. In this case, $\Sigma$ and $S$ determine each
other uniquely given $a$. We write $\ungen(a,\Sigma)$ for the unique
$S$ such that $\Sigma\vdash a:S$. Conversely, we write $\gen(S) =
(a,\Sigma)$ to indicate the generation of a fresh simple term $a$ and
a corresponding label context $\Sigma$ such that $\Sigma \vdash a :
S$.  There is an obvious interpretation $\interp{a} :
\interp{\Sigma} \to \interp{S}$ as an isomorphism in $\m$, $\R$, or
$\N$.
  
In the \textit{box} rule, the codomain of $D$ is $\interp{\Sigma'''}$
and by the definition of $\ungen$, we have $\Sigma''' \vdash b : U$.
Then $\interp{S} \xstackrel{D}{\longrightarrow} \interp{\Sigma'''}
\xstackrel{\interp{b}}{\longrightarrow} \interp{U}$ is a morphism in
$\m$, $\R$ or $\N$.

In the \textit{rev} and \textit{ctrl} rules, the morphism $D$ is
reversed/controlled using the reverse/control function from the
category it belongs to. If $D$ is not a morphism in $\R$ or $\N$,
respectively, then it will cause a runtime error.  The type
preservation property (Theorem \ref{thm:type-preservation}) will
ensure that there are no such runtime errors. In the \textit{wc} rule,
we again abuse the notation; if $D_1\in\N$, we apply the functor
$G:\N\to\R$ before the ``$\action$'' operation. It is a runtime error
if $D_1\in\m$ or $D_2\not\in\N$.

In the \textit{apply} rule, by the definition of configurations, we
have $C_3:X\to\interp{\Sigma_3}$, for some object $X$. We also have
$D:\interp{S}\to\interp{U}$. Moreover, $\gen(U) = (b, \Sigma)$ implies
that $\Sigma \vdash b : U$ and $\interp{b} : \interp{\Sigma} \to
\interp{U}$. Let $\Sigma_3'$ be the unique label context such that
$\Sigma_3'\vdash a:S$. We assert that $\Sigma_3$ can be written in the
form $\Sigma_3 = \Sigma_3',\Sigma_3''$ (it is a runtime error if the
assertion fails). We have
$\interp{a}:\interp{\Sigma_3'}\to\interp{S}$. Note that
$(\interp{b}^{\dagger}\circ D\circ \interp{a}) : \interp{\Sigma_3'}
\to \interp{\Sigma}$.  Let $\Sigma'=\Sigma,\Sigma_3''$, and let
$C':X\to\interp{\Sigma}$ be the following composition:
\[
C' = X
\xrightarrow{C_3}\interp{\Sigma_3}
\xrightarrow{\iso}\interp{\Sigma_3'}\otimes\interp{\Sigma_3''}
\xrightarrow{(\interp{b}^{\dagger}\circ D\circ \interp{a})\otimes\id}\interp{\Sigma}\otimes\interp{\Sigma_3''}
\xrightarrow{\iso}\interp{\Sigma'}
~~ = ~~
\vcenter{
\Qcircuit @C=.4em @R=.3em {
    & \qw & \multigate{2}{C_{3}} &\qw &  \multigate{1}{\interp{b}^{\dagger}\circ D\circ \interp{a}}&\qw & \\
    & \qw & \ghost{C_{3}}&\qw    &  \ghost{\interp{b}^{\dagger}\circ D\circ \interp{a}} & \qw & \\
    & \qw & \ghost{C_{3}}&\qw    &  \qw & \qw &
}
}.
\]
We write $(C',\Sigma') = \mathrm{append}(C_3,\Sigma_3,a,D,b,\Sigma)$
for this operation. With a slight abuse of notation, this composition
is always possible, even when the morphisms belong to different
categories.  For example, if $D \in \R$ and $C_{3} \in \m$, we can
apply the functor $I:\R\to\m$ to $D$ before the composition.

\begin{example}
Since the evaluation rules of Figure~\ref{fig:operational} are a bit
complex, we give an example illustrating their use. Consider the
configuration
\[
(\id_{I},~ \emptyset,~ \boxt_{S} \lift (\lambda x . \mathbf{let}
(a_{1}, a_{2}) = x \ \mathbf{in} \ \apply(\circc(S,\cnot,S), (a_{2},
a_{1})))),
\]
where $S = \Qubit\otimes \Qubit$ and $\cnot : \Qubit\otimes \Qubit\to
\Qubit\otimes \Qubit$ is a morphism in $\N$.  Using the \textit{box}
rule, we first call $\gen(S)$, which returns a pair of fresh labels
$(\ell_{1},\ell_{2})$ and the label context $\Sigma=\ell_{1}:\Qubit,
\ell_{2}:\Qubit$. Next, using the rules \textit{app} and \textit{let},
we evaluate
\[
(\interp{(\ell_{1}, \ell_{2})}^{\dagger},~ \Sigma,~ (\lambda x . \mathbf{let}
(a_{1}, a_{2}) = x \ \mathbf{in} \ \apply(\circc(S,\cnot,S), (a_{2},
a_{1}))))(\ell_{1}, \ell_{2}))
\]
to
\[
((\interp{(\ell_{1}, \ell_{2})}^{\dagger},~\Sigma,~\apply(\circc(S,\cnot,S),
(\ell_{2}, \ell_{1}))).
\]
Then by the \textit{apply} rule, it is evaluated to
\[
(\interp{(\ell_{3},\ell_{4})}^{\dagger}\circ\cnot\circ\interp{(\ell_{2},
  \ell_{1})}\circ\interp{(\ell_{1}, \ell_{2})}^{\dagger},~ \Sigma',~ (\ell_{3},
\ell_{4})),
\]
where $\gen(S)$ returns fresh labels $(\ell_{3}, \ell_{4})$ and the
label context $\Sigma' = \ell_{3} : \Qubit, \ell_{4} : \Qubit$.
Therefore, continuing with the \textit{box} rule, we obtain
\[
(\id_{I},~\emptyset,~ \circc(S,
\interp{(\ell_{3}, \ell_{4})}\circ
\interp{(\ell_{3}, \ell_{4})}^{\dagger}\circ
\cnot\circ
\interp{(\ell_{2}, \ell_{1})}\circ
\interp{(\ell_{1}, \ell_{2})}^{\dagger}
,S).
\]
Note that $\interp{(\ell_{3}, \ell_{4})}\circ \interp{(\ell_{3},
  \ell_{4})}^{\dagger} = \id_{S}$ and $ \interp{(\ell_{2},
  \ell_{1})}\circ \interp{(\ell_{1}, \ell_{2})}^{\dagger} = \gamma :
\Qubit\otimes \Qubit\to\Qubit\otimes \Qubit$, where the morphism
$\gamma$ comes from the symmetric monoidal structure on $\N$. Thus,
the final configuration is
\[
  (\id_{I},~\emptyset,~ \circc(S,\cnot\circ\gamma,S)).    
\]
Note that the effect of the morphism $\gamma$ is to switch two wires
without inserting an explicit swap gate (it is, for example, the same
as the interpretation of the function $\tt f$ in
Section~\ref{ssec:cont-perm}). If we later control the circuit
$\circc(S,\cnot\circ\gamma,S)$, this morphism $\gamma$ will be
replaced by a controlled swap gate, just like circuit (b) in
Section~\ref{ssec:cont-perm}.
\end{example}

\subsection{Type preservation}
\label{subsec:type-preservation}
\noindent
In order to formulate type preservation, we first define a notion of
\textit{well-typed configuration}.

\begin{definition}[Well-typed configuration]
  We define a well-typed configuration $S \vdash_{\alpha} (C, \Sigma,
  M) : A; \Sigma'$ to mean: There exist $\beta$, $\gamma$, $\Sigma''$
  such that $C \in \D^{\beta}(\interp{S}, \interp{\Sigma})$,
  $\Sigma=(\Sigma'',\Sigma')$, $\Sigma'' \vdash_{\gamma} M : A$ and
  $\alpha=\beta\wedge\gamma$.
\end{definition}

\begin{theorem}[Type preservation]
  \label{thm:type-preservation}
  Suppose $S\vdash_{\alpha} (C_1, \Sigma_1, M) : A ; \Sigma'$ and
  $(C_1, \Sigma_1, M) \Downarrow (C_2, \Sigma_2, V)$. Then
  $S\vdash_{\alpha} (C_2, \Sigma_2, V) : A ; \Sigma'$.
\end{theorem}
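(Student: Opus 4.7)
The plan is to proceed by induction on the derivation of the big-step evaluation $(C_1, \Sigma_1, M) \Downarrow (C_2, \Sigma_2, V)$. Before starting the main induction, I would establish two auxiliary facts. First, a \emph{substitution lemma}: if $\Phi, x:A, \Gamma \vdash_\alpha M : B$ and $\Phi \vdash_2 V : A$ for a value $V$, then $\Phi, \Gamma \vdash_\alpha [V/x]M : B$; for parameter-typed $V$, substitution into multiple occurrences must also be allowed. This is proved by structural induction on the typing derivation of $M$, taking some care with the splitting of linear contexts in the \emph{pair}, \emph{app}, \emph{let}, and \emph{apply} rules. Second, a \emph{value typing lemma}: every value derives its type at modality $2$, which ensures that substitution does not lower the modality of the surrounding term. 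This follows by inspecting the grammar of values against the typing rules that can conclude a value judgment.

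The cases \emph{app}, \emph{force}, \emph{let}, and \emph{pair} are essentially structural: each threads the circuit state through sub-evaluations, and the induction hypothesis furnishes the required category membership at each stage. The $\wedge$-combination of modalities in the typing rules matches exactly the way modalities combine in the corresponding chained configurations, so the resulting final configuration still lies in the claimed $\D^{\alpha}$. Here the substitution and value-typing lemmas are what make the modality of the reduct match the modality of the redex.

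The harder cases are the circuit-manipulating rules. For \emph{box}, the fresh state $\interp{a}^{\dagger}$ is an isomorphism arising from the symmetric monoidal structure, hence lives in $\R$ (and, via the functors, in $\N$ and $\m$ as needed). The induction hypothesis on the inner evaluation produces $D$ in the category determined by the modality $\gamma$ of $M'\,a$, and post-composing with $\interp{b}$ yields a morphism in the correct $\D^{\gamma}$; this is then wrapped as $\circc(S, \interp{b}\circ D, U)$ of type $\Circ_{\beta\wedge\gamma}(S, U)$, matching the typing rule \emph{box}. For \emph{apply}, one unfolds the $\append$ construction: the induction hypotheses provide $C_3 \in \D^{\beta}$ and $D \in \D^{\gamma}$, and the composition defining $C'$ uses only monoidal-structural isomorphisms together with $\interp{b}^{\dagger}\circ D\circ \interp{a}$, so $C'$ lies in $\D^{\beta\wedge\gamma}$, as required. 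For \emph{rev}, the typing premise $\gamma > 0$ forces $D \in \R$ or $D \in \N$, both dagger categories, so $D^{\dagger}$ exists in the same category; for \emph{ctrl}, the modality-$2$ premise forces $D \in \N$, and $\ctrled_{K}(D) \in \N$ by condition (\ref{cond:d}) of the controllable category; for \emph{wc}, the premises give $D_{2} \in \N$ and $D_{1} \in \R$ (after applying $G:\N\to\R$ if necessary), so $D_{1} \action D_{2} \in \N$ by condition (\ref{cond:b}).

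The main obstacle I anticipate is the bookkeeping in the \emph{apply} and \emph{box} cases: one must split $\Sigma_3$ as $\Sigma_3', \Sigma_3''$, track disjointness of freshly generated labels, and verify that the inclusions $I:\R\to\m$ and $G:\N\to\R$ compose coherently with the morphisms being spliced so that no implicit modality drop is hidden in a functor application. In other words, one must show that whenever we apply $I$ or $G$ to coerce a morphism into a larger category, the modality of the resulting configuration is still correctly bounded by $\beta \wedge \gamma$. This is routine once stated carefully, but it is the place where the proof is most likely to become delicate and where a clean auxiliary lemma about modality-respecting compositions in $\m$, $\R$, $\N$ would pay dividends.
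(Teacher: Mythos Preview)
Your proposal is correct and takes the same approach as the paper, namely induction on the derivation of $(C_1,\Sigma_1,M)\Downarrow(C_2,\Sigma_2,V)$; indeed the paper's proof consists of only that one sentence, so your write-up is simply a more detailed execution of the intended argument. The auxiliary substitution and value-modality lemmas you identify are exactly the standard ingredients such a proof needs, and your anticipation of the bookkeeping in the \textit{apply} and \textit{box} cases (label freshness, coercion along $I$ and $G$) is accurate.
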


\begin{proof}
  The proof is by induction on the derivation of
  $(C_1, \Sigma_1, M) \Downarrow (C_2, \Sigma_2, V)$.
\end{proof}

\section{A categorical semantics for reversing and control}
\label{sec:denotational-semantics}


\subsection{The abstract model}

As before, we assume that the categories $\m$, $\R$, and $\N$ are
given. These three categories lack the necessary structures to
interpret a programming language. Instead, we will interpret the
typing rules in a category $\A$, whose structure we now describe.

\begin{definition}
  \label{semantics-model}
  A category $\A$ is a \textit{model for Proto-Quipper-C} if it has
  the following structure.

  \begin{enumerate}
  \item \label{closed} $\A$ is symmetric monoidal closed. We write
    $\epsilon : (A\multimap B)\otimes A \to B$ for the application map.
    
  \item \label{coproducts} $\A$ has coproducts. Note that the tensor
    distributes over coproducts, because $-\otimes A$ is a left
    adjoint.
    
  \item \label{adj} There is an adjunction $p \dashv \flat : \set \to
    \A$ where $p$ is a strong monoidal functor.
    
  \item \label{monad-t} $\A$ is equipped with idempotent commutative
    strong monads $T_{0}$ and $T_{1}$ such that $T_{0} T_{1} \cong
    T_{0} \cong T_{1}T_{0}$.  More specifically, we require the
    natural maps $\eta_{T_{0}B} : T_{0}B \to T_{1}T_{0}B$ and
    $T_{0}\eta_{B} : T_{0}B \to T_{0}T_{1}B$ to be isomorphisms. By
    idempotent monad we mean that $\mu_{B} : T^{2}B\to TB$ is an
    isomorphism. We write $s : TA \otimes B \to T(A\otimes B)$ for the
    strength.
    
  \item \label{simple-types}

    There are full and faithful embeddings $\N\hookrightarrow\A$ ,
    $\R\hookrightarrow\Kl_{T_{1}}(\A)$, and
    $\m\hookrightarrow\Kl_{T_{0}}(\A)$.  These embedding functors are
    strong monoidal.  Moreover, the following diagram commutes for all
    $S, U \in \m$.
    \[ \footnotesize
      \begin{tikzcd}[cramped,row sep=small]
        \N(S, U)
        \arrow[r, "\cong"]
        \arrow[d, "~G"] &
        \A(S, U)
        \arrow[d, "~E"]\\
        \R(S, U)
        \arrow[r, "\cong"]
        \arrow[d, "~I"] &
        \Kl_{T_{1}}(\A)(S, U)
        \arrow[d, "~L"]\\
        \m(S, U)
        \arrow[r, "\cong"]& \Kl_{T_{0}}(\A)(S, U),
      \end{tikzcd}
    \]
    where $E$ and $L$ are the canonical identity-on-objects
    functors. Specifically, for any $f \in \A(A, B)$, we have $E(f) =
    \eta^{T_{1}} \circ f \in \Kl_{T_{1}}(\A)(A, B)$, and for any $f :
    A \to T_{1}B \in \Kl_{T_{1}}(\A)$, we have $L(f) = A
    \xstackrel{f}{\to} T_{1}B \xstackrel{\eta^{T_{0}}}{\to}
    T_{0}T_{1}B \xstackrel{(T_{0}\eta_{B})^{-1}}{\longrightarrow}
    T_{0}B$.
  \end{enumerate}      
\end{definition}

All models of Proto-Quipper support conditions
(\ref{closed})-(\ref{adj}).  Because $p : \set \to \A$ is a left
adjoint and is strong monoidal, we can deduce that
$
p(X) \cong p(\sum_{x\in X}1) \cong \sum_{x\in X}p(1) \cong  \sum_{x\in
  X}I.
$
Due to the adjunction $p \dashv \flat$, we also have
$
\set(X, \flat B)\cong \A(p(X), B) \cong \A(\sum_{x\in X}I, B) \cong \prod_{x\in X}\A(I, B)
\cong \set(X, \A(I, B)). 
$
Therefore by Yoneda's principle, we have $\flat(B) \cong \A(I, B)$.

For convenience, we define $T_2=\id$ to be the identity monad on $\A$.
Then in condition (\ref{monad-t}), the monads $T_0$, $T_1$, and $T_2$
represent (via their Kleisli categories) general circuits, reversible
circuits, and controllable circuits, respectively. Recall that we
write $\D^{\alpha}$ for the categories $\m$, $\R$, and $\N$ when
$\alpha=0$, $1$, and $2$, respectively. Condition (\ref{simple-types})
ensures that morphisms of simple types in the Kleisli category
$\Kl_{T_{\alpha}}(\A)$ correspond to morphisms in $\D^{\alpha}$.

For any $S, U\in \mathrm{obj}(\mathcal{\m})$, the maps $\unboxt$ and
$\boxt$ are defined by
\[\small
\begin{array}{@{}l}
  \unboxt : \D^{\beta}(S, U) \xstackrel{\cong}{} \A(S, T_{\beta}U)
  \xstackrel{\mathrm{curry}}{\to} \A(I, S \multimap T_{\beta}U)
  \xstackrel{\cong}{} \flat (S \multimap T_{\beta}U).
  \\
  \boxt : \flat T_{\alpha}(S \multimap T_{\beta}U)
  \xrightarrow{\iso}\set(1, \flat T_{\alpha}(S \multimap T_{\beta}U))
  \xrightarrow{\iso}\A(I, T_{\alpha}(S \multimap T_{\beta}U))
  \xrightarrow{k}\A(S, T_{\alpha\wedge\beta}U)
  \xrightarrow{\iso}\D^{\alpha \wedge\beta}(S, U).
\end{array}
\]
Here, $k:\A(I, T_{\alpha}(S \multimap T_{\beta}U))\to\A(S,
T_{\alpha\wedge\beta}U)$ is given by
\[
\A(I, T_{\alpha}(S \multimap T_{\beta}U))
\xrightarrow{}
\A(I, S \multimap T_{\alpha}T_{\beta}U)
\xrightarrow{}
\A(I, S \multimap T_{\alpha\wedge\beta}U)
\xrightarrow{}
\A(S, T_{\alpha\wedge\beta}U),
\]
where the first map comes from the strength.  Note that $\boxt$ is the
inverse of $\unboxt$ when $\alpha = 2$.

Given $\m, \R$ and $\N$, one can construct a
concrete category $\A$ satisfying the require properties,
using a generalization of our earlier work on biset enrichment \cite{FKRS-2023}. For
space reasons, we did not include the construction
here, but it can be found in the longer (and slightly outdated)
version of this paper at {\cite{FKRS2024-arxiv}}.


\subsection{Interpretation}

We will interpret the typing rules in the category $\A$.  The
following is the interpretation for types.
\[
\begin{array}{lll}
  \interp{\Qubit}& = & \Qubit\\
  \interp{A \otimes B}& = & \interp{A}\otimes \interp{B}\\
  \interp{\Circ_{\alpha}(S,U)}& = & p\D^{\alpha}(\interp{S}, \interp{U})\\
  
  \interp{!_{\alpha} A} & = & p\flat T_{\alpha}\interp{A}\\
  \interp{A\multimap_{\alpha} B} &  = & \interp{A} \multimap T_{\alpha} \interp{B}
\end{array}
\]
We interpret a typing context $\Gamma$ as the tensor of all of its
objects (denoted by $\interp{\Gamma}$).  Each valid typing judgment
$\Gamma \vdash_{\alpha} M : A$ will be interpreted as a morphism
$\interp{M} : \interp{\Gamma} \to T_{\alpha} \interp{A}$ in $\A$.
The interpretation is defined by induction on the typing rules.
See Appendix~\ref{app:interpretation} for the details.

\subsection{Soundness of operational semantics}

We now prove that the operational semantics is sound with respect to
the categorical model $\A$. To do so, we first interpret a well-typed
configuration as a morphism in $\A$.

\begin{definition}
  Suppose $S \vdash_{\alpha} (C, \Sigma, M) : A ; \Sigma'$ is a
  well-typed configuration, with $C \in \D^{\beta}(\interp{S},
  \interp{\Sigma})$, $\Sigma=(\Sigma'',\Sigma')$, $\Sigma''
  \vdash_{\gamma} M : A$ and $\alpha=\beta\wedge\gamma$.
  We define $\interp{(C, \Sigma, M)} : \interp{S} \to
  T_{\alpha}(\interp{A}\otimes \interp{\Sigma'})$ as follows:
  \[
  \interp{S}
  \xrightarrow{C}
  T_{\beta}(\interp{\Sigma})
  \xrightarrow{\iso}
  T_{\beta}(\interp{\Sigma''}\otimes \interp{\Sigma'})
  \xrightarrow{T_{\beta}( \interp{M}\otimes \interp{\Sigma'})}
  T_{\beta}(T_{\gamma}\interp{A} \otimes \interp{\Sigma'})
  \xrightarrow{}
  T_{\alpha}(\interp{A}\otimes \interp{\Sigma'}),\]
  where the last step uses the strength and the natural isomorphism
  $T_{\beta}T_{\gamma}\iso T_{\alpha}$.
\end{definition}

\begin{theorem}[Soundness of the evaluation]
  \label{thm:soundness}
  If $S \vdash_{\alpha} (C, \Sigma, M) : A ; \Sigma'$ and $(C, \Sigma,
  M) \Downarrow (C', \Sigma', V)$, then $\interp{(C, \Sigma, M)} =
  \interp{(C', \Sigma', V)}$.
\end{theorem}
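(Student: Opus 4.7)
The plan is to proceed by induction on the derivation of $(C, \Sigma, M) \Downarrow (C', \Sigma', V)$, doing one case per evaluation rule in Figure~\ref{fig:operational}. Before starting the induction, I would establish two standard auxiliary lemmas. First, a \emph{substitution lemma}: if $\Phi, x:A, \Gamma \vdash_{\alpha} M : B$ and $\Phi \vdash_{2} V : A$, then $\interp{[V/x]M} = \interp{M} \circ (\id_{\interp{\Phi,\Gamma}} \otimes \interp{V}) \circ \sigma$, where $\sigma$ rearranges the tensor factors appropriately; this is proved by induction on the typing derivation of $M$ and uses only naturality and the closed monoidal structure of $\A$. Second, a \emph{value lemma}: the interpretation of any value $V$ with $\Gamma \vdash_{2} V : A$ factors as $\eta^{T_{2}} \circ f$ for some $f : \interp{\Gamma} \to \interp{A}$ (trivially, since $T_{2} = \id$), and more generally the monadic unit $\eta^{T_{\alpha}}$ absorbs cleanly into the strength when values are placed in larger configurations.

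For the structural cases (\textit{app}, \textit{force}, \textit{let}, \textit{pair}), the argument is essentially bookkeeping: one expands $\interp{(C_{1},\Sigma_{1},MN)}$ using the interpretation of the \textit{app} typing rule, then repeatedly applies the inductive hypothesis to replace each $\interp{(C_{i},\Sigma_{i},\cdot)}$ with $\interp{(C_{i+1},\Sigma_{i+1},\cdot)}$, using the substitution lemma at the step $(C_{3},\Sigma_{3},[V/x]M')$. Associativity of Kleisli composition, commutativity of the monads $T_{\alpha}$, and the fact that $T_{\beta}T_{\gamma} \cong T_{\beta\wedge\gamma}$ ensure that the modality annotations in the interpretation agree on both sides. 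For the \textit{rev}, \textit{ctrl}, and \textit{wc} cases, I would use the commuting diagram of condition (\ref{simple-types}) in Definition~\ref{semantics-model} to transport the dagger, the $\ctrled_{K}$ operation, and the $\action$ operation from $\R$, $\N$ into the corresponding Kleisli categories, so that the categorical interpretation of $\circc(S,D^{\dagger},U)$, $\circc(\uu{K}\otimes S, \ctrled_{K}D, \uu{K}\otimes S)$, and $\circc(S, D_{1}\action D_{2}, S)$ literally denote the same morphisms in $\A$ that the operational rules produce.

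The main obstacles are the \textit{box} and \textit{apply} rules, where the circuit state genuinely changes and the interplay between wire-generation and the $\unboxt$/$\boxt$ definitions must be tracked carefully. For \textit{box}, I would show that the semantic $\boxt$ defined earlier, when applied to $\interp{\mathsf{lift}\, M'}$, computes exactly the composite $\interp{b}\circ D\circ \interp{a}$ arising in the \textit{box} evaluation rule; the key point is that the $\gen(S) = (a,\Sigma'')$ step corresponds to currying/uncurrying via the adjunction $p \dashv \flat$, and the initial circuit state $\interp{a}^{\dagger}$ is precisely the inverse of the isomorphism $\interp{a}:\interp{\Sigma''}\to\interp{S}$. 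For \textit{apply}, the $\mathrm{append}$ operation on the syntactic side must be shown to coincide with tensoring the circuit morphism $D$ into the larger circuit state $C_{3}$ via the strength $s$ of the monad; the renaming through $\interp{a}$ and $\interp{b}$ precisely rearranges labels so that domains and codomains match. Both steps rely on the fact that $\interp{a}$, $\interp{b}$ are semantically just permutation isomorphisms, and on naturality of strength with respect to these isomorphisms.

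Finally, to close the induction, one observes that in each rule the right-hand configuration $(C',\Sigma',V)$ has its interpretation computed from $\interp{V}$ postcomposed with the evolved circuit state; by type preservation (Theorem~\ref{thm:type-preservation}) the modality $\alpha$ on both sides matches, so no additional monad morphisms need to be inserted. The overall equality of $\interp{(C,\Sigma,M)}$ and $\interp{(C',\Sigma',V)}$ then follows by chaining the per-rule calculations.
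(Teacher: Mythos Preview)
Your proposal is correct and follows essentially the same approach as the paper: induction on the evaluation derivation, with \textit{box} and \textit{apply} as the substantive cases and the structural rules handled by a (implicit, in the paper) substitution lemma. One minor point: for the \textit{rev}, \textit{ctrl}, and \textit{wc} cases you do not actually need the commuting square of condition~(\ref{simple-types}); since $\interp{\circc(S,C,U)} = p\,1_{C}$ and the interpretations of $\reverse$, $\controlled_{K}$, $\withComputed$ are given by post-composing with $p$ applied to the set-level maps $(-)^{\dagger}$, $\ctrled_{K}$, $\action$, these cases reduce to functoriality of $p$ together with the identity $p(\ctrled_{K})\circ p\,1_{D} = p\,1_{\ctrled_{K}D}$ (and analogously for the others), exactly as the paper does.
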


\begin{proof}
  See Appendix~\ref{app:proof-soundness}.
\end{proof}

We remark that our semantics also satisfies computational adequacy,
but in our setting, it is a trivial consequence of soundness. This is
because if $V$ is a value of type $\Circ(S,U)$, then $V$ is by
definition a circuit, i.e., a morphism of the appropriate category
$\m$, $\R$, or $\N$. Its semantics is essentially itself. It
immediately follows that $\interp{V}=\interp{V'}$ implies
$V=V'$. Since our language is also terminating, adequacy for any
closed term of type $\Circ(S,U)$ then follows from soundness.

\section{Conclusion}
\label{conclude}

In this paper, we showed how to extend Proto-Quipper with reversing,
control, and the with-computed operation. Our language is
parameterized by three categories $\m$, $\R$, and $\N$, which
correspond to general quantum circuits, reversible circuits, and
controllable circuits, respectively.  We defined a type system that
uses modalities to distinguish the different types of circuits. We
provided an operational and a denotational semantics for the language;
the latter takes the form of an abstract categorical model in which
our modalities are represented by monads. We proved that the
operational semantics is sound with respect to the categorical model.

There are many possible directions for future work.  For example, in
this paper, we only considered the modalities of reversibility and
controllability.  But it seems that our construction of the type
system and its categorical semantics would easily generalize from a
three element set to an arbitrary poset of modalities. Introducing
additional modalities might be useful for characterizing additional
properties of gates.  Although we have made a prototype implementation
that includes a type inference algorithm, we did not formally study
its properties.  Another challenge for future work is to combine
modalities with dependent types.  Although our software implementation
does support both modalities and dependent types, we have not
considered a formal semantics for combining them.

\section*{Acknowledgements}

We thank the referees for their thoughtful comments. This work was
supported by the Natural Sciences and Engineering Research Council of
Canada (NSERC) and by the Air Force Office of Scientific Research
under Award No.\@ FA9550-21-1-0041.

\bibliographystyle{eptcs}
\bibliography{modalities}

\begin{thebibliography}{10}
\providecommand{\bibitemdeclare}[2]{}
\providecommand{\surnamestart}{}
\providecommand{\surnameend}{}
\providecommand{\urlprefix}{Available at }
\providecommand{\url}[1]{\texttt{#1}}
\providecommand{\href}[2]{\texttt{#2}}
\providecommand{\urlalt}[2]{\href{#1}{#2}}
\providecommand{\doi}[1]{doi:\urlalt{https://doi.org/#1}{#1}}
\providecommand{\eprint}[1]{arXiv:\urlalt{https://arxiv.org/abs/#1}{#1}}
\providecommand{\bibinfo}[2]{#2}

\bibitemdeclare{article}{bennett1973logical}
\bibitem{bennett1973logical}
\bibinfo{author}{Charles~H \surnamestart Bennett\surnameend}
  (\bibinfo{year}{1973}): \emph{\bibinfo{title}{Logical reversibility of
  computation}}.
\newblock {\slshape \bibinfo{journal}{IBM Journal of Research and Development}}
  \bibinfo{volume}{17}(\bibinfo{number}{6}), pp. \bibinfo{pages}{525--532},
  \doi{10.1147/rd.176.0525}.

\bibitemdeclare{inproceedings}{silq}
\bibitem{silq}
\bibinfo{author}{Benjamin \surnamestart Bichsel\surnameend},
  \bibinfo{author}{Maximilian \surnamestart Baader\surnameend},
  \bibinfo{author}{Timon \surnamestart Gehr\surnameend} \&
  \bibinfo{author}{Martin \surnamestart Vechev\surnameend}
  (\bibinfo{year}{2020}): \emph{\bibinfo{title}{Silq: A High-Level Quantum
  Language with Safe Uncomputation and Intuitive Semantics}}.
\newblock In: {\slshape \bibinfo{booktitle}{Proceedings of the 41st ACM SIGPLAN
  Conference on Programming Language Design and Implementation}},
  \bibinfo{series}{PLDI 2020}, \bibinfo{publisher}{Association for Computing
  Machinery}, \bibinfo{address}{New York, NY, USA}, pp.
  \bibinfo{pages}{286--300}, \doi{10.1145/3385412.3386007}.

\bibitemdeclare{unpublished}{ColledanL22}
\bibitem{ColledanL22}
\bibinfo{author}{Andrea \surnamestart Colledan\surnameend} \&
  \bibinfo{author}{Ugo \surnamestart Dal~Lago\surnameend}
  (\bibinfo{year}{2022}): \emph{\bibinfo{title}{On dynamic lifting and effect
  typing in circuit description languages (extended version)}}.
\newblock \bibinfo{note}{Available from \arxiv{2202.07636}}.

\bibitemdeclare{unpublished}{draper2000addition}
\bibitem{draper2000addition}
\bibinfo{author}{Thomas~G \surnamestart Draper\surnameend}
  (\bibinfo{year}{2000}): \emph{\bibinfo{title}{Addition on a quantum
  computer}}.
\newblock \bibinfo{note}{Available from \arxiv{quant-ph/0008033}}.

\bibitemdeclare{unpublished}{prototype}
\bibitem{prototype}
\bibinfo{author}{Peng \surnamestart Fu\surnameend} (\bibinfo{year}{2025}):
  \emph{\bibinfo{title}{Implementation of Proto-Quipper}}.
\newblock \bibinfo{note}{Available from
  \url{https://gitlab.com/frank-peng-fu/dpq-remake}}.

\bibitemdeclare{inproceedings}{FKRS2020-dpq-tutorial}
\bibitem{FKRS2020-dpq-tutorial}
\bibinfo{author}{Peng \surnamestart Fu\surnameend}, \bibinfo{author}{Kohei
  \surnamestart Kishida\surnameend}, \bibinfo{author}{Neil~J. \surnamestart
  Ross\surnameend} \& \bibinfo{author}{Peter \surnamestart Selinger\surnameend}
  (\bibinfo{year}{2020}): \emph{\bibinfo{title}{A tutorial introduction to
  quantum circuit programming in dependently typed {Proto-Quipper}}}.
\newblock In: {\slshape \bibinfo{booktitle}{Proceedings of the 12th
  International Conference on Reversible Computation, RC 2020, Oslo, Norway}},
  {\slshape \bibinfo{series}{Lecture Notes in Computer Science}}
  \bibinfo{volume}{12227}, \bibinfo{publisher}{Springer}, pp.
  \bibinfo{pages}{153--168}, \doi{10.1007/978-3-030-52482-1_9}.
\newblock \bibinfo{note}{Also available from \arxiv{2005.08396}}.

\bibitemdeclare{inproceedings}{FKRS-2022}
\bibitem{FKRS-2022}
\bibinfo{author}{Peng \surnamestart Fu\surnameend}, \bibinfo{author}{Kohei
  \surnamestart Kishida\surnameend}, \bibinfo{author}{Neil~J. \surnamestart
  Ross\surnameend} \& \bibinfo{author}{Peter \surnamestart Selinger\surnameend}
  (\bibinfo{year}{2023}): \emph{\bibinfo{title}{A biset-enriched categorical
  model for {Proto-Quipper} with dynamic lifting}}.
\newblock In: {\slshape \bibinfo{booktitle}{Proceedings of the 19th
  International Conference on Quantum Physics and Logic, QPL 2022, Oxford}},
  {\slshape \bibinfo{series}{Electronic Proceedings in Theoretical Computer
  Science}} \bibinfo{volume}{394}, pp. \bibinfo{pages}{302--342},
  \doi{10.4204/EPTCS.394.16}.

\bibitemdeclare{inproceedings}{FKRS-2023}
\bibitem{FKRS-2023}
\bibinfo{author}{Peng \surnamestart Fu\surnameend}, \bibinfo{author}{Kohei
  \surnamestart Kishida\surnameend}, \bibinfo{author}{Neil~J. \surnamestart
  Ross\surnameend} \& \bibinfo{author}{Peter \surnamestart Selinger\surnameend}
  (\bibinfo{year}{2023}): \emph{\bibinfo{title}{{Proto-Quipper} with dynamic
  lifting}}.
\newblock In: {\slshape \bibinfo{booktitle}{Proceedings of the 50th ACM SIGPLAN
  Symposium on Principles of Programming Languages, POPL 2023, Boston}},
  {\slshape \bibinfo{series}{Proceedings of the ACM on Programming Languages}}
  \bibinfo{volume}{7 (POPL)}, pp. \bibinfo{pages}{309--334},
  \doi{10.1145/3571204}.
\newblock \bibinfo{note}{Also available from \arxiv{2204.13041}}.

\bibitemdeclare{unpublished}{FKRS2024-arxiv}
\bibitem{FKRS2024-arxiv}
\bibinfo{author}{Peng \surnamestart Fu\surnameend}, \bibinfo{author}{Kohei
  \surnamestart Kishida\surnameend}, \bibinfo{author}{Neil~J. \surnamestart
  Ross\surnameend} \& \bibinfo{author}{Peter \surnamestart Selinger\surnameend}
  (\bibinfo{year}{2024}): \emph{\bibinfo{title}{{Proto-Quipper} with reversing
  and control, version 1}}.
\newblock \bibinfo{note}{Available from \arxiv{2410.22261v1}}.

\bibitemdeclare{inproceedings}{FKS2020-lindep}
\bibitem{FKS2020-lindep}
\bibinfo{author}{Peng \surnamestart Fu\surnameend}, \bibinfo{author}{Kohei
  \surnamestart Kishida\surnameend} \& \bibinfo{author}{Peter \surnamestart
  Selinger\surnameend} (\bibinfo{year}{2020}): \emph{\bibinfo{title}{Linear
  dependent type theory for quantum programming languages}}.
\newblock In: {\slshape \bibinfo{booktitle}{Proceedings of the 35th Annual
  ACM/IEEE Symposium on Logic in Computer Science, LICS 2020, Saarbr\"ucken,
  Germany}}, pp. \bibinfo{pages}{440--453}, \doi{10.1145/3373718.3394765}.
\newblock \bibinfo{note}{Also available from \arxiv{2004.13472}}.

\bibitemdeclare{unpublished}{cirq}
\bibitem{cirq}
\bibinfo{author}{\surnamestart Google\surnameend} (\bibinfo{year}{Accessed
  October 9, 2024}): \emph{\bibinfo{title}{{Cirq Software Reference}}}.
\newblock
  \bibinfo{note}{\url{https://quantumai.google/reference/python/cirq/ControlledOperation}}.

\bibitemdeclare{inproceedings}{GLRSV2013-rc}
\bibitem{GLRSV2013-rc}
\bibinfo{author}{Alexander \surnamestart Green\surnameend},
  \bibinfo{author}{Peter~LeFanu \surnamestart Lumsdaine\surnameend},
  \bibinfo{author}{Neil~J. \surnamestart Ross\surnameend},
  \bibinfo{author}{Peter \surnamestart Selinger\surnameend} \&
  \bibinfo{author}{Beno{\^\i}t \surnamestart Valiron\surnameend}
  (\bibinfo{year}{2013}): \emph{\bibinfo{title}{An introduction to quantum
  programming in {Quipper}}}.
\newblock In: {\slshape \bibinfo{booktitle}{Proceedings of the 5th
  International Conference on Reversible Computation, RC 2013, Victoria,
  British Columbia}}, {\slshape \bibinfo{series}{Lecture Notes in Computer
  Science}} \bibinfo{volume}{7948}, \bibinfo{publisher}{Springer}, pp.
  \bibinfo{pages}{110--124}, \doi{10.1007/978-3-642-38986-3_10}.
\newblock \bibinfo{note}{Also available from \arxiv{1304.5485}}.

\bibitemdeclare{inproceedings}{GLRSV2013-pldi}
\bibitem{GLRSV2013-pldi}
\bibinfo{author}{Alexander \surnamestart Green\surnameend},
  \bibinfo{author}{Peter~LeFanu \surnamestart Lumsdaine\surnameend},
  \bibinfo{author}{Neil~J. \surnamestart Ross\surnameend},
  \bibinfo{author}{Peter \surnamestart Selinger\surnameend} \&
  \bibinfo{author}{Beno{\^\i}t \surnamestart Valiron\surnameend}
  (\bibinfo{year}{2013}): \emph{\bibinfo{title}{{Quipper}: a scalable quantum
  programming language}}.
\newblock In: {\slshape \bibinfo{booktitle}{Proceedings of the 34th Annual ACM
  SIGPLAN Conference on Programming Language Design and Implementation, PLDI
  2013, Seattle}}, {\slshape \bibinfo{series}{ACM SIGPLAN Notices}}
  \bibinfo{volume}{48(6)}, pp. \bibinfo{pages}{333--342},
  \doi{10.1145/2499370.2462177}.
\newblock \bibinfo{note}{Also available from \arxiv{1304.3390}}.

\bibitemdeclare{article}{haner2018software}
\bibitem{haner2018software}
\bibinfo{author}{Thomas \surnamestart H{\"a}ner\surnameend},
  \bibinfo{author}{Damian~S \surnamestart Steiger\surnameend},
  \bibinfo{author}{Krysta \surnamestart Svore\surnameend} \&
  \bibinfo{author}{Matthias \surnamestart Troyer\surnameend}
  (\bibinfo{year}{2018}): \emph{\bibinfo{title}{A software methodology for
  compiling quantum programs}}.
\newblock {\slshape \bibinfo{journal}{Quantum Science and Technology}}
  \bibinfo{volume}{3}(\bibinfo{number}{2}), p. \bibinfo{pages}{020501},
  \doi{10.1088/2058-9565/aaa5cc}.
\newblock \bibinfo{note}{Also available from \arxiv{1604.01401}}.

\bibitemdeclare{unpublished}{qiskit}
\bibitem{qiskit}
\bibinfo{author}{\surnamestart IBM\surnameend} (\bibinfo{year}{Accessed October
  9, 2024}): \emph{\bibinfo{title}{{IBM} Quantum Documentation,
  {QuantumCircuit} class}}.
\newblock
  \bibinfo{note}{\url{https://docs.quantum.ibm.com/api/qiskit/qiskit.circuit.QuantumCircuit\#control}}.

\bibitemdeclare{inproceedings}{LeePVX21}
\bibitem{LeePVX21}
\bibinfo{author}{Dongho \surnamestart Lee\surnameend},
  \bibinfo{author}{Valentin \surnamestart Perrelle\surnameend},
  \bibinfo{author}{Beno{\^{\i}}t \surnamestart Valiron\surnameend} \&
  \bibinfo{author}{Zhaowei \surnamestart Xu\surnameend} (\bibinfo{year}{2021}):
  \emph{\bibinfo{title}{Concrete categorical model of a quantum circuit
  description language with measurement}}.
\newblock In \bibinfo{editor}{Mikolaj \surnamestart Bojanczyk\surnameend} \&
  \bibinfo{editor}{Chandra \surnamestart Chekuri\surnameend}, editors:
  {\slshape \bibinfo{booktitle}{41st {IARCS} Annual Conference on Foundations
  of Software Technology and Theoretical Computer Science, {FSTTCS} 2021}},
  {\slshape \bibinfo{series}{LIPIcs}} \bibinfo{volume}{213},
  \bibinfo{publisher}{Schloss Dagstuhl - Leibniz-Zentrum f{\"{u}}r Informatik},
  pp. \bibinfo{pages}{51:1--51:20}, \doi{10.4230/LIPIcs.FSTTCS.2021.51}.
\newblock \bibinfo{note}{Also available from \arxiv{2110.02691}}.

\bibitemdeclare{inproceedings}{BMZ2018}
\bibitem{BMZ2018}
\bibinfo{author}{Bert \surnamestart Lindenhovius\surnameend},
  \bibinfo{author}{Michael \surnamestart Mislove\surnameend} \&
  \bibinfo{author}{Vladimir \surnamestart Zamdzhiev\surnameend}
  (\bibinfo{year}{2018}): \emph{\bibinfo{title}{Enriching a linear/non-linear
  lambda calculus: A programming language for string diagrams}}.
\newblock In: {\slshape \bibinfo{booktitle}{Proceedings of the 33rd Annual
  ACM/IEEE Symposium on Logic in Computer Science}}, \bibinfo{series}{LICS
  '18}, \bibinfo{publisher}{Association for Computing Machinery},
  \bibinfo{address}{New York, NY, USA}, pp. \bibinfo{pages}{659--668},
  \doi{10.1145/3209108.3209196}.
\newblock \bibinfo{note}{Also available from \arxiv{1804.09822}}.

\bibitemdeclare{inproceedings}{paykin2017qwire}
\bibitem{paykin2017qwire}
\bibinfo{author}{Jennifer \surnamestart Paykin\surnameend},
  \bibinfo{author}{Robert \surnamestart Rand\surnameend} \&
  \bibinfo{author}{Steve \surnamestart Zdancewic\surnameend}
  (\bibinfo{year}{2017}): \emph{\bibinfo{title}{{QWIRE}: a core language for
  quantum circuits}}.
\newblock In: {\slshape \bibinfo{booktitle}{Proceedings of the 44th ACM SIGPLAN
  Symposium on Principles of Programming Languages}}, {\slshape
  \bibinfo{series}{ACM SIGPLAN Notices}}~\bibinfo{volume}{52},
  \bibinfo{organization}{ACM}, pp. \bibinfo{pages}{846--858},
  \doi{10.1145/3009837.3009894}.

\bibitemdeclare{inproceedings}{RS2017-pqmodel}
\bibitem{RS2017-pqmodel}
\bibinfo{author}{Francisco \surnamestart Rios\surnameend} \&
  \bibinfo{author}{Peter \surnamestart Selinger\surnameend}
  (\bibinfo{year}{2018}): \emph{\bibinfo{title}{A categorical model for a
  quantum circuit description language. {Extended} abstract}}.
\newblock In: {\slshape \bibinfo{booktitle}{Proceedings of the 14th
  International Conference on Quantum Physics and Logic, QPL 2017, Nijmegen}},
  {\slshape \bibinfo{series}{Electronic Proceedings in Theoretical Computer
  Science}} \bibinfo{volume}{266}, pp. \bibinfo{pages}{164--178},
  \doi{10.4204/EPTCS.266.11}.

\bibitemdeclare{phdthesis}{ross2015algebraic}
\bibitem{ross2015algebraic}
\bibinfo{author}{Neil~J. \surnamestart Ross\surnameend} (\bibinfo{year}{2015}):
  \emph{\bibinfo{title}{Algebraic and logical methods in quantum computation}}.
\newblock Ph.D. thesis, \bibinfo{school}{Dalhousie University, Department of
  Mathematics and Statistics}.
\newblock \bibinfo{note}{Available from \arxiv{1510.02198}}.

\bibitemdeclare{inproceedings}{selinger2007dagger}
\bibitem{selinger2007dagger}
\bibinfo{author}{Peter \surnamestart Selinger\surnameend}
  (\bibinfo{year}{2007}): \emph{\bibinfo{title}{Dagger Compact Closed
  Categories and Completely Positive Maps}}.
\newblock In: {\slshape \bibinfo{booktitle}{Proceedings of the 3rd
  International Workshop on Quantum Programming Languages, QPL 2005, Chicago}},
  {\slshape \bibinfo{series}{Electronic Notes in Theoretical Computer Science}}
  \bibinfo{volume}{170}, \bibinfo{publisher}{Elsevier}, pp.
  \bibinfo{pages}{139--163}, \doi{10.1016/j.entcs.2006.12.018}.

\bibitemdeclare{article}{selinger2013quantum}
\bibitem{selinger2013quantum}
\bibinfo{author}{Peter \surnamestart Selinger\surnameend}
  (\bibinfo{year}{2013}): \emph{\bibinfo{title}{Quantum Circuits of {$T$}-Depth
  One}}.
\newblock {\slshape \bibinfo{journal}{Physical Review A}} \bibinfo{volume}{87},
  p. \bibinfo{pages}{042302 (4 pages)}, \doi{10.1103/PhysRevA.87.042302}.
\newblock \bibinfo{note}{Also available from \arxiv{1210.0974}}.

\bibitemdeclare{article}{steiger2018projectq}
\bibitem{steiger2018projectq}
\bibinfo{author}{Damian~S \surnamestart Steiger\surnameend},
  \bibinfo{author}{Thomas \surnamestart H{\"a}ner\surnameend} \&
  \bibinfo{author}{Matthias \surnamestart Troyer\surnameend}
  (\bibinfo{year}{2018}): \emph{\bibinfo{title}{{ProjectQ}: an open source
  software framework for quantum computing}}.
\newblock {\slshape \bibinfo{journal}{Quantum}} \bibinfo{volume}{2},
  p.~\bibinfo{pages}{49}, \doi{10.22331/q-2018-01-31-49}.
\newblock \bibinfo{note}{Also available from \arxiv{1612.08091}}.

\end{thebibliography}
\newpage
\appendix

\section{Controlling a CCZ gate}
\label{app:controlling-ccz}

Here, we give a basic example of using control and with-computed in
the prototype implementation of Proto-Quipper from {\cite{prototype}}.
It is well-known that a CCZ gate can be implemented by 7 T-gates with
T-depth one \cite{selinger2013quantum}. See the following circuit.

\[
\includegraphics[scale=1]{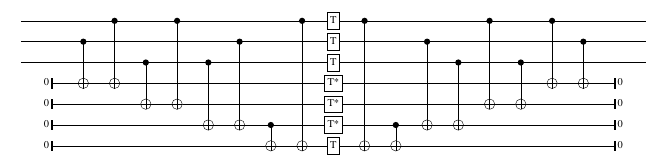}
\]
\noindent We can define the above circuit in Proto-Quipper, using its
\texttt{withComputed} operator.

{\small
\begin{verbatim}
my_ccz : Circ(Qubit * Qubit * Qubit, Qubit * Qubit * Qubit)
my_ccz = withComputed box_cnot_circuit box_parallel_T
\end{verbatim}
}

The function \texttt{box\_parallel\_T} generates the 7 parallel T
gates in the middle of the circuit and the function
\texttt{box\_cnot\_circuit} generates the initialization gates and the
cnot circuits before the parallel T gates. Their definitions are
available in \texttt{examples/CCZ.dpq} in {\cite{prototype}}.

We can use Proto-Quipper's \texttt{control} operator to add an an
extra control to the CCZ circuit.

{\small
\begin{verbatim}
ctrl_ccz : Circ(Qubit * (Qubit * Qubit * Qubit), Qubit * (Qubit * Qubit * Qubit))
ctrl_ccz = control my_ccz
\end{verbatim}
}

\noindent This generates the following circuit. We can see that only
the T-gates in the middle are controlled.
\[
  \includegraphics[scale=0.8]{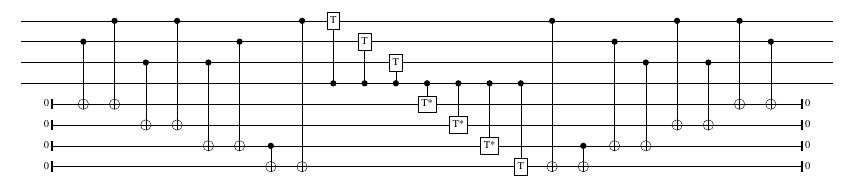}
\]

If instead, we program the CCZ circuit without using the
\texttt{withComputed} operation, we will get an error when trying to
control it. In the program below, \texttt{my\_ccz'} is defined by
manual composition, where \texttt{cnot\_circuit\_rev} is the reverse
version of \texttt{cnot\_circuit}.

{\small
\begin{verbatim}
cnot_circuit_rev : 
  !(Qubit * Qubit * Qubit * Qubit * Qubit * Qubit * Qubit 
    -> Qubit * Qubit * Qubit)
cnot_circuit_rev = unbox (reverse (boxCirc cnot_circuit))

my_ccz' : Circ(Qubit * Qubit * Qubit, Qubit * Qubit * Qubit)
my_ccz' = 
  boxCirc $ \input -> cnot_circuit_rev (parallel_T (cnot_circuit input))

-- The following gives rise to a typing error.
ctrl_my_ccz' : Circ(Qubit * (Qubit * Qubit * Qubit), Qubit * (Qubit * Qubit * Qubit))
ctrl_my_ccz' = control my_ccz'
\end{verbatim}
}

Controlling the circuit \texttt{my\_ccz'} gives rise to a typing
error, because according to the semantics of \texttt{control}, we
would have to control all the gates in the circuit \texttt{my\_ccz'},
which includes the non-controllable initialization gates. So in order
to control the CCZ circuit that has initialization gates, we must
construct the circuit using the \texttt{withComputed} operation.

\section{The categorical interpretation of the typing rules}
\label{app:interpretation}

Here, we only give some important cases. The remaining typing rules can be interpreted similarly.
  \begin{itemize}
  \item Case
  \[
    \infer{\Gamma \vdash_\alpha \mathsf{box}_{S} M :
      \Circ_{\alpha_1 \wedge \alpha_2}(S,U)}
    {\Gamma \vdash_\alpha M : {!}_{\alpha_1} (S \multimap_{\alpha_2} U)}
  \]
  We define $\interp{\mathsf{box}_{S} M}$ to be
  \[\interp{\Gamma}\xstackrel{\interp{M}}{\to} T_{\alpha} p\flat
T_{\alpha_{1}}(\interp{S} \multimap T_{\alpha_{2}}\interp{U})
\xstackrel{T_{\alpha}p \boxt}{\to} T_{\alpha}
p\D^{\alpha_{1}\wedge\alpha_{2}}(\interp{S}, \interp{U}).\] 
    
  \item Case
    \[
      \infer[]{\Phi \vdash_{2} \circc(S, C, U) :  \Circ_{\alpha}(S, U)}
      {
      \begin{array}{ll}
        C \in \D^{\alpha}(\interp{S}, \interp{U})
      \end{array}
      }
    \]
    
    Since $C : \interp{S} \to \interp{U}$ is
    a morphism in $\D^{\alpha}$, 
     we define $\interp{\Phi \vdash_{2} \circc(S, C, U) :  \Circ_{\alpha}(S, U)}$
     as the following composition, where $1_{C} : 1 \to \D^{\alpha}(\interp{S}, \interp{U})$
     such that $1_{C}(*) = C$ and $\mathrm{discard} = p !_{X}, !_{X}\in \set(X, 1)$.
    
    \[ \interp{\Phi} \xstackrel{\discard}{\longrightarrow} p1 \xstackrel{p 1_{C}}{\longrightarrow} p\D^{\alpha}(\interp{S}, \interp{U}).\]

  \item Case
    \[ \infer{\Gamma \vdash_\alpha \controlled_{K} M : \Circ_{2}(\uu{K}\otimes U, \uu{K}\otimes U)}{\Gamma \vdash_\alpha M : \Circ_{2}(U, U)}
\]

By the induction hypothesis, we have $\interp{M} :\interp{\Gamma} \to
T_{\alpha} p\N(\interp{S}, \interp{S}) $. 
Using the function $\ctrled_{K} : \N(\interp{U}, \interp{U})\to \N(\interp{\uu{K}}\otimes \interp{U}, \interp{\uu{K}}\otimes \interp{U})$,
we define $\interp{\controlled_{K} M}$ to be the following.
    \[\interp{\Gamma} \xstackrel{\interp{M}}{\longrightarrow} 
      T_{\alpha}p\N(\interp{U}, \interp{U})
      \xstackrel{T_{\alpha}p(\ctrled_{K})}{\longrightarrow}
      T_{\alpha}p\N( \interp{\uu{K}} \otimes \interp{U}, \interp{\uu{K}}\otimes \interp{U})\]

  \item Case
    \[
      \infer{\Gamma_{1}, \Gamma_{2} \vdash_{\alpha \wedge \beta} \withComputed M N : \Circ_2(U, U)}
      {
        \begin{array}{ll}
          \Gamma_{1} \vdash_\alpha M : \Circ_{\gamma}(U, S) & \gamma > 0 \\
          \Gamma_{2} \vdash_\beta N : \Circ_2(S, S)
        \end{array}
         }  
    \]

    Suppose $\alpha = \beta = \gamma = 2$
    and $\Gamma_{1} \vdash_2 M : \Circ_{2}(U, S)$.
    By the induction hypothesis, we have
    \[pG_{U,S} \circ \interp{M} : \interp{\Gamma_{1}} \to p\N(\interp{U}, \interp{S}) \to p\R(\interp{U}, \interp{S}), \]
    \[\interp{N} : \interp{\Gamma_{2}} \to p\N(\interp{S}, \interp{S}). \]

    Using the function $\withComputed :  \R(\interp{U}, \interp{S}) \times \N(\interp{S}, \interp{S}) \to \N(\interp{U}, \interp{U})$, we define

    $\interp{\withComputed M N}$ to be the following.
    \[\interp{\Gamma_{1}}\otimes \interp{\Gamma_{2}}
      \xstackrel{(pG_{U,S} \circ \interp{M})\otimes \interp{N}}{\longrightarrow}
      p\R(\interp{U}, \interp{S}) \otimes p\N(\interp{S}, \interp{S}) 
    \]
    \[
         \xstackrel{p(\withComputed)}{\longrightarrow}         p\N(\interp{U}, \interp{U})
    \]
    Suppose $\alpha = \beta = 2$, $\gamma = 1$ and  $\Gamma_{1} \vdash_2 M : \Circ_{1}(U, S)$.
    By the induction hypothesis, we have
    \[\interp{M} : \interp{\Gamma_{1}} \to p\R(\interp{U}, \interp{S}), \]
    \[\interp{N} : \interp{\Gamma_{2}} \to p\N(\interp{S}, \interp{S}). \]
    We define
    $\interp{\withComputed M N}$ to be the following.
    \[\interp{\Gamma_{1}}\otimes \interp{\Gamma_{2}}
      \xstackrel{\interp{M}\otimes \interp{N} }{\longrightarrow}
      p\R(\interp{U}, \interp{S}) \otimes p\N(\interp{S}, \interp{S}) 
   \xstackrel{p(\withComputed)}{\longrightarrow}
      p\N(\interp{U}, \interp{U})
      \]
      The remaining cases (e.g. when $\alpha,\beta\neq 2$) are similar.
  \end{itemize}

\section{Proof of soundness}
\label{app:proof-soundness}

\begin{theorem}[Soundness of the evaluation]
  If $S \vdash_{\alpha} (C, \Sigma, M) : A ; \Sigma'$ and $(C, \Sigma,
  M) \Downarrow (C', \Sigma', V)$, then $\interp{(C, \Sigma, M)} =
  \interp{(C', \Sigma', V)}$.
\end{theorem}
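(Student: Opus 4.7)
The plan is to proceed by induction on the derivation of $(C_1, \Sigma_1, M) \Downarrow (C_2, \Sigma_2, V)$. At every step, Theorem~\ref{thm:type-preservation} guarantees that the intermediate configurations are well-typed with compatible modalities, so the induction hypotheses produce morphisms of the expected type. The key preliminary is a substitution lemma of the form $\interp{[V/x]M} = \interp{M} \circ (\id \otimes \interp{V} \otimes \id)$ (up to strength), which handles the purely functional cases \textit{app}, \textit{force}, \textit{let}, and \textit{pair} uniformly: in each one, I combine the three sub-derivations using the strength of $T_\alpha$ together with the natural isomorphism $T_\beta T_\gamma \cong T_{\beta \wedge \gamma}$ from Definition~\ref{semantics-model}(\ref{monad-t}).

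The cases that interact with the circuit state require unwinding the definitions more carefully. For \textit{box}, I would show that the operational protocol---generate fresh labels via $\gen(S)$, evaluate against $\interp{a}^{\dagger}$, then re-identify outputs via $\ungen$---implements the categorical $\boxt$ from Section~\ref{sec:denotational-semantics}. Specifically, pre- and post-composition by $\interp{a}$ and $\interp{b}$ realize the currying and strength steps in the definition of $\boxt$. For \textit{apply}, I check that $\append$ is exactly Kleisli composition in $\Kl_{T_\alpha}(\A)$, using the embedding $\m \hookrightarrow \Kl_{T_0}(\A)$ from Definition~\ref{semantics-model}(\ref{simple-types}) to handle configurations whose morphisms and circuits live in different categories. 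For \textit{rev} and \textit{ctrl}, soundness is immediate from the way the interpretation was defined, namely via the dagger of $\R$ and the $\ctrled_K$ function of $\N$. For \textit{wc}, the required identity follows from the equation $G(g \action h) = g; G(h); g^{\dagger}$ together with the commutativity of the square linking $\N$, $\R$, and their respective Kleisli categories, which ensures that the interpretation of $D_1 \action D_2$ produced operationally agrees with the categorical composite $p(\withComputed) \circ (pG \otimes \id)$ used to interpret \textit{wc}.

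The principal obstacle will be the \textit{box} case. There, I must faithfully translate a purely syntactic fresh-label protocol into the abstract currying and strength that define $\boxt$, and simultaneously track three potentially distinct modalities $\alpha$, $\beta$, $\gamma$ as the nested evaluation crosses the boundary between the current circuit state and the boxed sub-circuit; in particular, the inner evaluation takes place with respect to the empty-context morphism $\interp{a}^{\dagger}$, so I must verify that its contribution to the outer interpretation collapses appropriately. Once that correspondence is pinned down, the remaining diagram chases are routine applications of naturality of the strength, idempotence of $T_0$ and $T_1$, and the commutativity of $G$, $I$, $E$, $L$ on simple types.
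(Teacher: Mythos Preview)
Your plan matches the paper's own proof: induction on the evaluation derivation, with the functional cases handled by substitution and strength, and the circuit-manipulating cases (\textit{box}, \textit{apply}, \textit{rev}, \textit{ctrl}, \textit{wc}) handled by unwinding the definitions and matching them to the categorical constructions. Your identification of \textit{box} as the delicate case is accurate; the paper reduces it to the identity $\interp{b}\circ D = \boxt(\delta\,\interp{M'})$, exactly the fresh-label/currying correspondence you describe.

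One point to correct: your justification for \textit{wc} invokes the wrong equation. The identity $G(g \action h) = g;G(h);g^{\dagger}$ describes what happens when you push the \emph{result} of $\action$ through $G$; it plays no role in soundness. Both the operational rule and the interpretation of $\withComputed$ use the action $\action$ itself (with $G$ applied to the first argument when $D_1\in\N$, by the abuse of notation noted in Section~\ref{op-sem}). So the required equality is simply that $\withComputed\circ(1_{D_1}\times 1_{D_2}) = 1_{D_1\action D_2}$ as functions in $\set$, hence $p(\withComputed)\circ(p1_{D_1}\otimes p1_{D_2}) = p1_{D_1\action D_2}$, which is definitional. The case is therefore easier than you suggest, and neither the equation for $G$ nor the Kleisli square is needed here.
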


The proof is by induction on $(C,\Sigma,M) \Downarrow (C',\Sigma',
V)$. We only give the proof for the rules \textit{box}, \textit{apply},
\textit{ctrl}, and \textit{wc}, as the remaining rules are routine.
\begin{itemize}
  \item Case
    \[
    \infer[\textit{box}]{(C, \Sigma, \mathsf{box}_{S} M) \Downarrow (C', \Sigma', \circc(S, \interp{b}\circ D, U)) }
          {
            \begin{array}{c}
              (C, \Sigma, M) \Downarrow (C', \Sigma', \mathsf{lift} \ M') \\
              \gen(S) = (a, \Sigma'') \\
              (\interp{a}^{\dagger}, \Sigma'', M'\ a) \Downarrow (D, \Sigma''', b)\\
              \ungen(b, \Sigma''') = U
            \end{array}
          }      
          \]
          Suppose $S' \vdash_{2} (C, \Sigma, \boxt_{S} M) : \Circ_{0}(S, U); \Sigma'_{2}$.
          This implies that $C \in \N(\interp{S}, \interp{\Sigma'_{1}}\otimes \interp{\Sigma'_{2}})$ and $\Sigma_{1}' \vdash_{2} \boxt_{S} M : \Circ_{0}(S, U)$.
          Thus $\Sigma_{1}' \vdash_{2} M : {!}_{0}(S \multimap_{2} U)$, $C'\in  \N(\interp{S}, \interp{\Sigma_{2}'})$,
          and $\vdash_{0} M' : S \multimap_{2} U$. 
          By the induction hypothesis, $\interp{(C, \Sigma, M)} = \interp{(C', \Sigma', \lift M')}$ and $\interp{(\interp{a}^{\dagger}, \Sigma'', M'a)} = \interp{(D,\Sigma''', b)}$. Thus we have
          \[
          \interp{S'} \xstackrel{C}{\to} \interp{\Sigma'_{1}}\otimes\interp{\Sigma'_{2}}
          \xstackrel{\interp{M}\otimes \interp{\Sigma'_{2}}}{\to} p\flat T_{0}(\interp{S}\multimap \interp{U}) \otimes \interp{\Sigma'_{2}} 
          \]
          \[
          = \interp{S'} \xstackrel{C'}{\to} I\otimes\interp{\Sigma'_{2}}\xstackrel{p\delta\interp{M'}\otimes \interp{\Sigma'_{2}}}{\to} p\flat T_{0}(\interp{S}\multimap \interp{U}) \otimes \interp{\Sigma'_{2}} 
          \]
          and

          \begin{align*}
            I \otimes \interp{S} \xstackrel{I \otimes \interp{a}^{\dagger}}{\to} & I\otimes \interp{\Sigma''} \xstackrel{\interp{M'}\otimes \interp{a}}{\to} T_{0}(\interp{S} \multimap \interp{U})\otimes \interp{S}\\
            & \xstackrel{s}{\to} T_{0}(\interp{S} \multimap \interp{U}\otimes \interp{S}) \xstackrel{T_{0}\epsilon}{\to} T_{0}\interp{U}
          \end{align*}
          $=$
          \begin{align*}
            I\otimes \interp{S} \xstackrel{\interp{M'}\otimes \interp{S}}{\to} & T_{0}(\interp{S} \multimap \interp{U})\otimes \interp{S} \\
            & \xstackrel{s}{\to} T_{0}(\interp{S} \multimap \interp{U}\otimes \interp{S}) \xstackrel{T_{0}\epsilon}{\to} T_{0}\interp{U}
          \end{align*}
          $\stackrel{(*)}{=}$
          \begin{align*}
            \interp{S} \xstackrel{D}{\to} T_{0}\interp{\Sigma'''} \xstackrel{T_{0}\interp{b}}{\to} T_{0}\interp{U}.
          \end{align*}
          Moreover,
          $\interp{(C, \Sigma, \boxt_{S} M)}=$
          \begin{align*}
            \interp{S'} \xstackrel{C}{\to} & 
            \interp{\Sigma_{1}'} \otimes \interp{\Sigma_{2}'}\xstackrel{\interp{M}\otimes \interp{\Sigma_{2}'}}{\to} p\flat T_{0}(\interp{S}\multimap \interp{U})\otimes \interp{\Sigma_{2}'}\\
            & \xstackrel{p \boxt}{\to}  p\m(\interp{S}, \interp{U})\otimes \interp{\Sigma_{2}'} 
          \end{align*}
          $=$
          \begin{align*}
            \interp{S'} \xstackrel{C'}{\to}& I\otimes\interp{\Sigma'_{2}}\xstackrel{p\delta\interp{M'}\otimes \interp{\Sigma'_{2}}}{\to} p\flat T_{0}(\interp{S}\multimap \interp{U}) \otimes \interp{\Sigma'_{2}} \\
            & \xstackrel{p \boxt}{\to} p\m(\interp{S}, \interp{U})\otimes \interp{\Sigma_{2}'}.
          \end{align*}
          So we just need to show
          \[I\xstackrel{p\delta\interp{M'}}{\to} p\flat T_{0}(\interp{S}\multimap \interp{U}) 
          \xstackrel{p \boxt}{\to} p\m(\interp{S}, \interp{U}) = I \xstackrel{p1_{\interp{b}\circ D}}{\to} p\m(\interp{S}, \interp{U}).\]
          In other words, we need to show $\interp{b}\circ D  = \boxt(\delta \interp{M'})$.
          It suffices to show they are equal in $\A$. 
          This is true by definition of $\boxt$ and $(*)$. 

        \item Case
          \[
          \infer[\textit{apply}]{(C_1, \Sigma_1, \mathsf{apply}(M, N))  \Downarrow (C', \Sigma_{5}, b)}
                {
                  \begin{array}{c}
                    (C_1, \Sigma_1, M) \Downarrow (C_2, \Sigma_2, \circc(S, D, U)) \\
                    (C_2, \Sigma_2, N) \Downarrow (C_3, \Sigma_{3}, a) \\
                    \gen(U) = (b, \Sigma_{4})  \\          
                    (C', \Sigma_{5}) = \mathrm{append}(C_3,\Sigma_{3}, a, D, b, \Sigma_{4}) 
                  \end{array}
                }
                \]

                Suppose $S' \vdash_{0} (C_{1}, \Sigma_{1}, \apply(M, N)) : U; \Sigma'$, where
                $\Sigma_{1} = (\Sigma'', \Sigma')$, $C_{1} \in \R(\interp{S}, \interp{\Sigma''}\otimes \interp{\Sigma'})$, $\Sigma'' \vdash_{0} \apply(M, N) : U$,
                $\Sigma_{1}'' \vdash_{0} M : \Circ_{1}(S, U)$ and $\Sigma_{2}'' \vdash_{1} N : S$ and $\Sigma'' = (\Sigma_{1}'', \Sigma_{2}''), \Sigma_{2} = (\Sigma_{2}'', \Sigma'), \Sigma_{3} = (\Sigma_{2}''', \Sigma')$. By the induction hypothesis, we know that
                $\interp{(C_{1},\Sigma_{1}, M)} = \interp{(C_{2}, \Sigma_{2}, \circc(S, D, U))}$ and $\interp{(C_{2}, \Sigma_{2}, N)} = \interp{(C_{3}, \Sigma_{3}, a)}$.
                Thus
                \begin{align*}
                  \interp{S'} & \xstackrel{C_{1}}{\longrightarrow} T_{1}(\interp{\Sigma_{1}''}\otimes\interp{\Sigma_{2}''}\otimes \interp{\Sigma'})\\
                  &  \xstackrel{T_{1}(\interp{M}\otimes\interp{\Sigma_{2}''}\otimes \interp{\Sigma'})}{\longrightarrow}
                  T_{1}(T_{0}p\N(\interp{S}, \interp{U}) \otimes \interp{\Sigma_{2}''}\otimes \interp{\Sigma'}) \\
                  & \xstackrel{T_{1}s}{\longrightarrow}T_{1}T_{0}(p\N(\interp{S}, \interp{U}) \otimes \interp{\Sigma_{2}''}\otimes \interp{\Sigma'}) \\
                  & \xstackrel{\cong}{\longrightarrow} T_{0}(p\N(\interp{S}, \interp{U}) \otimes \interp{\Sigma_{2}''}\otimes \interp{\Sigma'})
                \end{align*}
                $=$
                \begin{align*}
                  \interp{S'} \xstackrel{C_{2}}{\longrightarrow}T_{0}(\interp{\Sigma_{2}''}\otimes \interp{\Sigma'})
                  \xstackrel{T_{0}(p1_{D}\otimes \interp{\Sigma_{2}''}\otimes \interp{\Sigma'})}{\longrightarrow} T_{0}(p\N(\interp{S}, \interp{U})\otimes \interp{\Sigma_{2}''}\otimes \interp{\Sigma'})
                \end{align*}
                and
                \begin{align*}
                  \interp{S'} &\xstackrel{C_{2}}{\longrightarrow}T_{0}(\interp{\Sigma_{2}''}\otimes \interp{\Sigma'})
                  \xstackrel{T_{0}(\interp{N}\otimes \interp{\Sigma'})}{\longrightarrow} T_{0}(T_{1}\interp{S}\otimes \interp{\Sigma'})\\
                  & \xstackrel{T_{0}s}{\longrightarrow}T_{0}T_{1}(\interp{S}\otimes \interp{\Sigma'})
                  \xstackrel{\cong}{\longrightarrow} T_{0}(\interp{S}\otimes \interp{\Sigma'})
                \end{align*}
                $=$
                \begin{align*}
                  \interp{S'} \xstackrel{C_{3}}{\longrightarrow}T_{0}(\interp{\Sigma_{2}'''}\otimes \interp{\Sigma'})
                  \xstackrel{T_{0}(\interp{a}\otimes \interp{\Sigma'})}{\longrightarrow} T_{0}(\interp{S}\otimes \interp{\Sigma'}). 
                \end{align*}
                We want to show that $\interp{(C_{1},\Sigma_{1}, \apply(M, N))} = \interp{(C', \Sigma_{5}, b)}$, where

                $(C', \Sigma_{5}) = \mathrm{append}(C_3, \Sigma_{3}, a, D, b, \Sigma_{4}) $ is the following morphism in $\m$. So $\Sigma_{5} = (\Sigma_{4}, \Sigma')$.
                \begin{align*}
                  \interp{S'} & \xstackrel{C_{3}}{\longrightarrow}
                  \interp{\Sigma_{2}'''}\otimes \interp{\Sigma'}
                  \xstackrel{\interp{a}\otimes \interp{\Sigma'}}{\longrightarrow} \interp{S} \otimes \interp{\Sigma'}\\
                  & 
                  \xstackrel{G(D)\otimes \interp{\Sigma'}}{\longrightarrow} \interp{U} \otimes \interp{\Sigma'}
                  \xstackrel{\interp{b}^{\dagger}\otimes \interp{\Sigma'}}{\longrightarrow} \interp{\Sigma_{4}} \otimes \interp{\Sigma'}.
                \end{align*}
                Since $\m \hookrightarrow \Kl_{T_{0}}(\A)$ and $\N \hookrightarrow \A$, the
                corresponding morphism in $\A$ is 
                \begin{align*}
                  \interp{S'} & \xstackrel{C_{3}}{\longrightarrow}
                  T_{0}(\interp{\Sigma_{2}'''}\otimes \interp{\Sigma'})
                  \xstackrel{T_{0}(\interp{a}\otimes \interp{\Sigma'})}{\longrightarrow} T_{0}(\interp{S} \otimes \interp{\Sigma'}) \\
                  & \xstackrel{T_{0}(D\otimes \interp{\Sigma'})}{\longrightarrow} T_{0}(\interp{U} \otimes \interp{\Sigma'})
                  \xstackrel{T_{0}(\interp{b}^{\dagger}\otimes \interp{\Sigma'})}{\longrightarrow} T_{0}(\interp{\Sigma_4} \otimes \interp{\Sigma'}).
                \end{align*}
                We have
                $RHS =$

                \begin{align*}
                  \interp{S'} \xstackrel{C_{3}}{\longrightarrow}
                  T_{0}(\interp{\Sigma_{2}'''}\otimes \interp{\Sigma'})
                  \xstackrel{T_{0}(\interp{a}\otimes \interp{\Sigma'})}{\longrightarrow} T_{0}(\interp{S} \otimes \interp{\Sigma'})
                  \xstackrel{T_{0}(D\otimes \interp{\Sigma'})}{\longrightarrow} T_{0}(\interp{U} \otimes \interp{\Sigma'})
                \end{align*}
                and $LHS =$
                \begin{align*}
                  \interp{S'} & \xstackrel{C_{1}}{\longrightarrow}
                  T_{1}(\interp{\Sigma_{1}''}\otimes\interp{\Sigma_{2}''}\otimes \interp{\Sigma'})\\
                  & \xstackrel{T_{1}(\interp{M}\otimes \interp{N}\otimes \interp{\Sigma'})}{\longrightarrow}
                  T_{1}(T_{0}p\N(\interp{S}, \interp{U})\otimes T_{1}\interp{S}\otimes \interp{\Sigma'})\\
                  & \xstackrel{T_{1}s}{\longrightarrow} T_{1}T_{0}(p\N(\interp{S}, \interp{U})\otimes T_{1}\interp{S}\otimes \interp{\Sigma'}) \xstackrel{T_{0}s}{\longrightarrow}T_{0}T_{1}(p\N(\interp{S}, \interp{U})\otimes \interp{S}\otimes \interp{\Sigma'}) \\
                  & \xstackrel{T_{0}(\unboxt\otimes \interp{S}\otimes\interp{\Sigma'})}{\longrightarrow} T_{0}(p\flat(\interp{S}\multimap \interp{U})\otimes \interp{S}\otimes \interp{\Sigma'}) \\
                  & \xstackrel{T_{0}(\force\otimes \interp{S}\otimes\interp{\Sigma'})}{\longrightarrow} T_{0}((\interp{S}\multimap \interp{U})\otimes \interp{S}\otimes \interp{\Sigma'}) \xstackrel{T_{0}(\epsilon\otimes\interp{\Sigma'})}{\longrightarrow} T_{0}( \interp{U}\otimes \interp{\Sigma'}) 
                \end{align*}
                $=$
                
                \begin{align*}
                  \interp{S'} & \xstackrel{C_{2}}{\longrightarrow}T_{0}(\interp{\Sigma_{2}''}\otimes \interp{\Sigma'}) \\
                  & \xstackrel{T_{0}(p1_{D}\otimes \interp{N}\otimes \interp{\Sigma'})}{\longrightarrow} T_{0}(p\N(\interp{S}, \interp{U})\otimes T_{1}\interp{S}\otimes \interp{\Sigma'})\\
                  & \xstackrel{T_{0}s}{\longrightarrow}T_{0}T_{1}(p\N(\interp{S}, \interp{U})\otimes \interp{S}\otimes \interp{\Sigma'}) \\
                  & \xstackrel{T_{0}(\unboxt \otimes \interp{S}\otimes\interp{\Sigma'})}{\longrightarrow} T_{0}(p\flat(\interp{S}\multimap \interp{U})\otimes \interp{S}\otimes \interp{\Sigma'}) \\
                  & \xstackrel{T_{0}(\force\otimes \interp{S}\otimes\interp{\Sigma'})}{\longrightarrow} T_{0}((\interp{S}\multimap \interp{U})\otimes \interp{S}\otimes \interp{\Sigma'}) \xstackrel{T_{0}(\epsilon\otimes\interp{\Sigma'})}{\longrightarrow} T_{0}( \interp{U}\otimes \interp{\Sigma'}) 
                \end{align*}
                $=$
                \begin{align*}
                  \interp{S'} & \xstackrel{C_{3}}{\longrightarrow}T_{0}(\interp{\Sigma_{2}'''}\otimes \interp{\Sigma'})
                  \xstackrel{T_{0}(p1_{D}\otimes \interp{a}\otimes \interp{\Sigma'})}{\longrightarrow} T_{0}(p\N(\interp{S}, \interp{U})\otimes\interp{S}\otimes \interp{\Sigma'})
                  \\
                  & \xstackrel{T_{0}(\unboxt\otimes \interp{S}\otimes\interp{\Sigma'})}{\longrightarrow} T_{0}(p\flat(\interp{S}\multimap \interp{U})\otimes \interp{S}\otimes \interp{\Sigma'}) 
                  \\
                  & \xstackrel{T_{0}(\force\otimes \interp{S}\otimes\interp{\Sigma'})}{\longrightarrow} T_{0}((\interp{S}\multimap \interp{U})\otimes \interp{S}\otimes \interp{\Sigma'}) \xstackrel{T_{0}(\epsilon\otimes\interp{\Sigma'})}{\longrightarrow} T_{0}( \interp{U}\otimes \interp{\Sigma'}). 
                \end{align*}

                So we just need to show
                \begin{align*}
                  \interp{S} \xstackrel{D}{\to} \interp{U} 
                \end{align*}
                $=$
                \begin{align*}
                  I \otimes \interp{S} & \xstackrel{1_{D}}{\to} p\N(\interp{S}, \interp{U})\otimes \interp{S}
                  \xstackrel{p \unboxt\otimes \interp{S}}{\to} p\flat(\interp{S}\multimap \interp{U}) \otimes \interp{S}\\
                  &  \xstackrel{\force \otimes \interp{S}}{\to} (\interp{S}\multimap \interp{U}) \otimes \interp{S} \xstackrel{\epsilon}{\to} \interp{U}.
                \end{align*}
                This is true because $LHS =$
                \begin{align*}
                  I\otimes \interp{S} \xstackrel{\mathrm{curry}(D)\otimes \interp{S}}{\to} (\interp{S}\multimap \interp{U})\otimes \interp{S} \xstackrel{\epsilon}{\to} \interp{U}
                \end{align*}
                and $RHS =$
                \begin{align*}
                  I\otimes \interp{S} &
                  \xstackrel{p (\delta \curry{D})  \otimes \interp{S}}{\to} p\flat (\interp{S}\multimap \interp{U})\otimes \interp{S}\\
                  & \xstackrel{\force \otimes \interp{S}}{\to}(\interp{S}\multimap \interp{U})\otimes \interp{S} \xstackrel{\epsilon}{\to} \interp{U}
                \end{align*}
                and we have $\force \circ p \delta = \id$. 

\item Case
  \[
  \infer[ctrl]{(C, \Sigma, \controlled_{K} M) \Downarrow (C', \Sigma''',  \circc(\uu{K}\otimes U, \ctrled_{K}D, \uu{K}\otimes U))}
        {
          \begin{array}{c}
            (C, \Sigma, M) \Downarrow (C', \Sigma''', \circc(U, D, U)) \\
          \end{array}
        }
        \]
        \begin{itemize}
        \item Suppose $C : \interp{S} \to \interp{\Sigma''}\otimes \interp{\Sigma'} \in \N$,
          $\Sigma = (\Sigma'', \Sigma')$ and
          $\Sigma'' \vdash_{2} M : \Circ_{2}(U, U)$. 
          By the induction hypothesis and type preservation (Theorem \ref{thm:type-preservation}), we have

          \[\interp{(C, \Sigma, M)} = \interp{(C' , \Sigma', \circc(U, D, U))}\]
          and $S \vdash_{2} (C', \Sigma', \circc(U, D, U)) : \Circ_{2}(U, U) : \Sigma'$, where $\vdash_{2} \circc(U, D, U) : \Circ_{2}(U, U)$ and $\Sigma'''= \Sigma'$.
          So 
          \[
          \interp{S} \xstackrel{C}{\to}\interp{\Sigma''} \otimes \interp{\Sigma'}\xstackrel{\interp{M}\otimes \interp{\Sigma'}}{\to} p\N(\interp{U}, \interp{U})\otimes \interp{\Sigma'}
          \]
          \[
          = \interp{S} \xstackrel{C'}{\to}I \otimes \interp{\Sigma'}\xstackrel{p1_{D} \otimes \interp{\Sigma'}}{\to} p\N(\interp{U}, \interp{U})\otimes \interp{\Sigma'}.
          \]
          The above equality implies the following. 
          \[
          \interp{S} \xstackrel{C}{\to}\interp{\Sigma''} \otimes \interp{\Sigma'}\xstackrel{\interp{M}\otimes \interp{\Sigma'}}{\to} p\N(\interp{U}, \interp{U})\otimes \interp{\Sigma'}
          \]
          \[
          \quad\quad\quad\quad \xstackrel{{p(\ctrled_{K})\otimes \interp{\Sigma'}}}{\to}
          p\N(\interp{\uu{K}}\otimes \interp{U}, \interp{\uu{K}}\otimes \interp{U})\otimes \interp{\Sigma'}
          \]
          \[
          = \interp{S} \xstackrel{C'}{\to}I \otimes \interp{\Sigma'}\xstackrel{p1_{D}\otimes \interp{\Sigma'}}{\to} p\N(\interp{U}, \interp{U})\otimes \interp{\Sigma'}
          \]
          \[\quad\quad\quad\quad \xstackrel{p(\ctrled_{K})\otimes \interp{\Sigma'}}{\to}
          p\N(\interp{\uu{K}}\otimes \interp{U}, \interp{\uu{K}}\otimes \interp{U})\otimes \interp{\Sigma'}.
          \]

          Thus $\interp{(C, \Sigma, \controlled_{K} M)} =  \interp{(C', \Sigma', \circc(\uu{K}\otimes U, \ctrled_{K}(D), \uu{K}\otimes U))}$. 
          
        \item Now suppose $\Sigma = (\Sigma'', \Sigma')$, $C \in \m(\interp{S}, \interp{\Sigma''}\otimes \interp{\Sigma'})$ and
          $\Sigma'' \vdash_{1} M : \Circ_{2}(U, U)$.
          By the induction hypothesis, we have $\interp{(C, \Sigma, M)} = \interp{(C' , \Sigma''', \circc(U, D, U))}$, i.e.,
          \[
          \interp{S} \xstackrel{C}{\to}T_{0}(\interp{\Sigma''} \otimes \interp{\Sigma'})\xstackrel{T_{0}(\interp{M}\otimes \interp{\Sigma'})}{\to} T_{0}(T_{1}p\N(\interp{U}, \interp{U})\otimes \interp{\Sigma'})
          \]
          \[
          \xstackrel{T_{0}s}{\to} T_{0}(p\N(\interp{U}, \interp{U})\otimes \interp{\Sigma'})
          \]
          \[
          = \interp{S} \xstackrel{C'}{\to} T_{0}(I \otimes \interp{\Sigma'})\xstackrel{T_{0}(p1_{D}\otimes \interp{\Sigma'})}{\to} T_{0}(p\N(\interp{U}, \interp{U})\otimes \interp{\Sigma'}).
          \]
          Note that $\Sigma'''= \Sigma'$. The above equality implies the following. 
          \[
          \interp{S} \xstackrel{C}{\to}T_{0}(\interp{\Sigma''} \otimes \interp{\Sigma'})\xstackrel{T_{0}(\interp{M}\otimes \interp{\Sigma'})}{\to} T_{0}(T_{1}p\N(\interp{U}, \interp{U})\otimes \interp{\Sigma'})
          \]
          \[
          \xstackrel{T_{0}s}{\to} T_{0}(p\N(\interp{U}, \interp{U})\otimes \interp{\Sigma'})
          \]
          \[
          \xstackrel{T_{0}(p(\ctrled_{K}))\otimes \interp{\Sigma'})}{\to} T_{0}(p\N(\interp{\uu{K}}\otimes \interp{U}, \interp{\uu{K}} \otimes \interp{U})\otimes \interp{\Sigma'})
          \]
          \[
          = \interp{S} \xstackrel{C'}{\to} T_{0}(I \otimes \interp{\Sigma'})\xstackrel{T_{0}(p1_{D}\otimes \interp{\Sigma'})}{\to} T_{0}(p\N(\interp{U}, \interp{U})\otimes \interp{\Sigma'})
          \]
          \[
          \xstackrel{T_{0}(p(\ctrled_{K})\otimes \interp{\Sigma'})}{\to} T_{0}(p\N(\interp{\uu{K}}\otimes \interp{U}, \interp{\uu{K}} \otimes \interp{U})\otimes \interp{\Sigma'}).
          \]
          By the naturality of the strength $s$, we have \[\interp{(C, \Sigma, \controlled_{K} M)} =  \interp{(C', \Sigma''', \circc(\uu{K}\otimes U, \ctrled_{K}D, \uu{K}\otimes U))}.\]
        \end{itemize}
        
      \item Case
        \[
        \infer[wc]{(C, \Sigma, \withCompute{M}{N}) \Downarrow
          (C'', \Sigma_{2}, \circc(U, D_1 \action D_2, U ))}
              {
                \begin{array}{c}
                  (C, \Sigma, M) \Downarrow (C', \Sigma_{1}, \circc(U, D_1, S'))\\
                  (C', \Sigma_{1}, N) \Downarrow (C'', \Sigma_{2}, \circc(S', D_2, S'))
                \end{array}
              }          
              \]
              \begin{itemize}
              \item Suppose 
                $\Sigma = (\Sigma_{1}'', \Sigma_{2}'', \Sigma')$, $C \in \N(\interp{S}, \interp{\Sigma_{1}''}\otimes \interp{\Sigma_{2}''}\otimes \interp{\Sigma'})$, $\Sigma_{1}'' \vdash_{2} N : \Circ_{2}(S', S')$ and $\Sigma_{2}'' \vdash_{2} M : \Circ_{1}(U, S')$. By the induction hypothesis, we have
                \[\interp{(C, \Sigma, M)} = \interp{(C', \Sigma_{1}, \circc(U, D_{1}, S'))}\] and
                $\interp{(C', \Sigma_{1}, N)} = \interp{(C'', \Sigma_{2}, \circc(S', D_{2}, S'))}$.
                Thus we have 
                \[
                \interp{S} \xstackrel{C}{\to} \interp{\Sigma''_{1}}\otimes \interp{\Sigma_{2}''} \otimes \interp{\Sigma'} \xstackrel{\interp{M}\otimes \interp{\Sigma_{2}''} \otimes \interp{\Sigma'}}{\to}
                p\R(\interp{U}, \interp{S'}) \otimes \interp{\Sigma_{2}''} \otimes \interp{\Sigma'}
                \]
                \[
                = \interp{S} \xstackrel{C'}{\to} I\otimes \interp{\Sigma_{2}''} \otimes \interp{\Sigma'} \xstackrel{p1_{D_{1}}\otimes \interp{\Sigma_{2}''} \otimes \interp{\Sigma'}}{\to}
                p\R(\interp{U}, \interp{S'}) \otimes \interp{\Sigma_{2}''} \otimes \interp{\Sigma'}
                \]
                and
                \[
                \interp{S} \xstackrel{C'}{\to} \interp{\Sigma_{2}''} \otimes \interp{\Sigma'} \xstackrel{\interp{N} \otimes \interp{\Sigma'}}{\to}
                p\N(\interp{S'}, \interp{S'}) \otimes \interp{\Sigma'}
                \]
                \[
                = \interp{S} \xstackrel{C''}{\to} I\otimes \interp{\Sigma'} \xstackrel{p1_{D_{2}} \otimes \interp{\Sigma'}}{\to}
                p\N(\interp{S'}, \interp{S'}) \otimes \interp{\Sigma'}.
                \]
                We want to show $\interp{(C, \Sigma, \withCompute{M}{N})} = \interp{(C'', \Sigma_{2}, \circc(U, D_{1}\action D_{2}, U))}$, i.e.,
                \[
                \interp{S} \xstackrel{C}{\to}\interp{\Sigma''_{1}}\otimes \interp{\Sigma_{2}''} \otimes \interp{\Sigma'} \xstackrel{\interp{M}\otimes \interp{N} \otimes \interp{\Sigma'}}{\longrightarrow}
                \]
                \[
                p\R(\interp{U}, \interp{S}) \otimes p\N(\interp{S}, \interp{S})  \otimes \interp{\Sigma'}\xstackrel{p(\withComputed) \otimes \interp{\Sigma'}}{\longrightarrow} p\N(\interp{U}, \interp{U}) \otimes \interp{\Sigma'}
                \]

                \[
                = \interp{S} \xstackrel{C''}{\to} I\otimes \interp{\Sigma'} \xstackrel{p1_{D_{1} \action D_{2}} \otimes \interp{\Sigma'}}{\to} p\N(\interp{U}, \interp{U}) \otimes \interp{\Sigma'}.
                \]

                This is the case by the induction hypotheses and $\withComputed \circ (1_{D_{1}}\otimes 1_{D_{2}}) = 1_{D_{1} \action D_{2}}$. 
              \end{itemize}
              
  \end{itemize}

\end{document}